\begin{document}

\begin{verbatim}\end{verbatim}\vspace{2.5cm}

\begin{frontmatter}

\title{Proper Hamiltonian Paths in Edge-Coloured Multigraphs}

\author[ESP]{Raquel \'Agueda}
\author[LRI]{Valentin Borozan}
\author[DC]{Marina Groshaus\thanksref{M}}
\author[LRI]{Yannis Manoussakis} 
\author[LRI]{Gervais Mendy}
\author[LRI]{Leandro Montero}

\address[ESP]{Departamento de An\'alisis Econ\'omico y Finanzas, Universidad de Castilla, \\ 
La Mancha 45071 Toledo, Spain. \\{\rm \texttt{raquel.agueda@uclm.es}}}

\address[LRI]{L.R.I.,  B\^at. 650, Universit\'e Paris-Sud 11, \\ 
91405 Orsay Cedex, France. \\{\rm \texttt{\{valik,yannis,mendy,lmontero\}@lri.fr}}}

\address[DC]{Departamento de Computaci\'on, FCEyN, Universidad de Buenos Aires, \\ 
Buenos Aires, Argentina. \\{\rm \texttt{groshaus@dc.uba.ar}}}

\thanks[M]{Partially supported by UBACyT X456, X143 and ANPCyT PICT 1562 Grants and by \mbox{CONICET}, Argentina.}

\begin{abstract} 
Given a $c$-edge-coloured multigraph, a proper Hamiltonian path is a path that contains all the vertices 
of the multigraph such that no two adjacent edges have the same colour. In this work we establish sufficient conditions for an 
edge-coloured multigraph to guarantee the existence of a proper Hamiltonian path, involving various parameters as the number of edges,
the number of colours, the rainbow degree and the connectivity.
\end{abstract}

\begin{keyword}
Multigraph, Proper Hamiltonian Path, Edge-Coloured Graph
\end{keyword}

\end{frontmatter}

\section{Introduction}\label{intro}
The study of problems modelled by edge-coloured graphs have resulted in important
developments recently. For instance, the research on long coloured cycles
and paths for edge-coloured graphs has provided interesting results~\cite{Bang-Jensen2001}. 
From a practical perspective, problems arising in molecular biology are often modeled using coloured
graphs, i.e., graphs with coloured edges and/or vertices~\cite{Pevzner2000}. Given an edge-coloured
graph, the original problems are equivalent to extract subgraphs coloured in a specified pattern.
The most natural pattern in such a context is that of proper colourings, i.e., adjacent edges
have different colours. 

In this work we give sufficient conditions involving various parameters as the number of edges, rainbow degree, etc,
in order to guarantee the existence of proper Hamiltonian paths in edge-coloured multigraphs where parallel edges with same colours 
are not allowed. Notice that the proper Hamiltonian path and proper Hamiltonian cycle problems are both $NP$-complete in the general case.
However it is polynomial to find a proper Hamiltonian path in $c$-edge-coloured complete graphs, $c \geq 2$~\cite{Feng}.
It is also polynomial to find a proper Hamiltonian cycle in $2$-edge-coloured complete graphs~\cite{bankfalvi},
but it is still open to determine the computational complexity for $c\geq 3$~\cite{Benkouar}.
Many other results for edge-coloured multigraphs can be found in the survey by Bang-Jensen and Gutin~\cite{BG97}.
Results involving only degree conditions can be found in~\cite{Abouelaoualim2010}.

Formally, let $I_c=\{1,2,\ldots, c\}$ be a set of $c \geq 2$ colours. Throughout this paper, $G^c$ denotes a \emph{c-edge-coloured 
multigraph} such that each edge is coloured with one colour in $I_c$ and no two parallel edges joining the same pair of vertices 
have the same colour. Let $n$ be the number of vertices and $m$ be the number of edges of $G^c$.
If $H$ is a subgraph of $G^c$, then $N^i_H(x)$ denotes
the set of vertices of $H$ adjacent to $x$ with an edge of colour $i$. Whenever $H$ is isomorphic to $G^c$, 
we write $N^i(x)$ instead of $N^i_{G^c} (x)$. The \emph{coloured i-degree} of a vertex $x$,
denoted by $d^i(x)$, is the cardinality of $N^i(x)$. As usual $N(x)$ denotes the neighbourhood of $x$, $d(x)$ its degree and 
$\delta(G)$ the minimum degree among all vertices of $G^c$.
The \emph{rainbow degree} of a vertex $x$, denoted by $rd(x)$, 
is the number of different colours on the edges incident to $x$. The \emph{rainbow degree} of a multigraph $G^c$, denoted by $rd(G^c)$, is 
the minimum rainbow degree among its vertices. An edge with endpoints $x$ and $y$ is denoted by 
$xy$, and its colour by $c(xy)$. A \emph{rainbow complete multigraph} is the one having all possible coloured edges between any pair of vertices 
(its number of edges is therefore $c\binom{n}{2}$). The \emph{complement} of a multigraph $G^c$ denoted by $\overline{G^c}$, is 
a multigraph with the same vertices as $G^c$ and an edge $vw \in E(\overline{G^c})$ on colour $i$ if and only if $vw \notin E(G^c)$ on that colour.
We say that an edge $xy$ is a \emph{missing edge} of $G^c$ if $xy \in E(\overline{G^c})$.
The graph $G^i$ is the spanning subgraph of $G^c$ with edges only in colour $i$.
A subgraph of $G^c$ is said to be \emph{properly edge-coloured} if any two adjacent edges in this subgraph differ 
in colour. A \emph{Hamiltonian path} (\emph{cycle}) is a path (cycle) containing all vertices of the multigraph. A path is said to be 
\emph{compatible} with a given matching $M$ if the edges of the path are alternatively in $M$ and not in $M$. We assume that the first and the last edge of the path 
are in $M$ otherwise we just remove one (or both) of them in order to have this property.
All multigraphs are assumed to be connected.

This paper is organized as follows: In Section~\ref{prem_res} we present some 
preliminary results that will be useful for the rest of the paper. In Section~\ref{2_edge_col} 
we study proper Hamiltonian paths in $2$-edge-coloured multigraphs. In Section~\ref{3_edge_col} 
we study proper Hamiltonian paths in $c$-edge-coloured multigraphs, for $c \geq 3$. 
Notice that in this work we focus only on edge-coloured multigraphs since it makes no sense to study such conditions for simple edge-coloured graphs.

\section{Preliminary results}\label{prem_res}

\begin{lemma}\label{matchingsPerfect} Let $G$ be a connected non-coloured simple
graph on $n$ vertices, $n \geq 9$. If $m\geq \binom{n-2}{2} + 3$, then
$G$ has a matching $M$ of size $|M|= \lfloor \frac{n}{2} \rfloor$.  
\end{lemma}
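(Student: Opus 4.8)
The plan is to proceed by contradiction using the Berge--Tutte formula: the maximum matching of $G$ has size $\frac12\bigl(n-\max_{S}(o(G-S)-|S|)\bigr)$, where $o(H)$ is the number of odd components of $H$ and the maximum ranges over all $S\subseteq V(G)$. A matching of size $\lfloor n/2\rfloor$ is the largest conceivable one, so its absence means the deficiency $\max_S(o(G-S)-|S|)$ is at least $2$; since the deficiency is congruent to $n$ modulo $2$, for odd $n$ it is in fact at least $3$. I would therefore fix a set $S$, writing $s=|S|$, with $o(G-S)\ge s+2$ (and $o(G-S)\ge s+3$ when $n$ is odd). Connectivity immediately gives $s\ge 1$, because if $S=\emptyset$ then $G-S=G$ is connected and $o(G)\le 1$, which is too small.

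The core of the argument is to show that such an $S$ forces $G$ to have too many missing edges. Since $m\ge\binom{n-2}{2}+3$, the number of missing edges of $G$ is at most $\binom{n}{2}-\binom{n-2}{2}-3=2n-6$, so it suffices to exhibit at least $2n-5$ of them. Any two vertices lying in distinct components of $G-S$ form a missing edge; thus if $d_1,\dots,d_p$ are the component sizes of $G-S$, the number of missing edges is at least $\sum_{i<j}d_id_j=\binom{n-s}{2}-\sum_i\binom{d_i}{2}$. At least $s+2$ of these components are odd, so $p\ge s+2$ (and $p\ge s+3$ for odd $n$). Since $\sum_i\binom{d_i}{2}$ is, for a fixed number of parts summing to $n-s$, maximised by one large part together with singletons, and this maximum only decreases as the number of parts grows, the extremal configuration is $p=s+2$ with a single part of size $n-2s-1$ and $s+1$ singletons. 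This yields the lower bound $\binom{n-s}{2}-\binom{n-2s-1}{2}=\tfrac12(s+1)(2n-3s-2)$ on the number of missing edges, and the analogous $\tfrac12(s+2)(2n-3s-3)$ when $n$ is odd.

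Finally I would minimise these expressions over the admissible range $1\le s\le(n-2)/2$ (respectively $(n-3)/2$), the upper limit coming from the requirement that the large component be non-empty. Each is a downward parabola in $s$, hence attains its minimum at an endpoint: at $s=1$ the values are $2n-5$ and $3n-9$, while at the right endpoint both are of order $n^2/8$. The main obstacle — and the reason for the hypothesis $n\ge 9$ — is the routine but delicate verification that the smaller endpoint value still strictly exceeds $2n-6$; a direct check shows the sole failure is at $n=8$, where the triangle joined to five pairwise non-adjacent vertices has exactly $\binom{6}{2}+3$ edges yet no perfect matching. For $n\ge 9$ every admissible $s$ produces at least $2n-5$ missing edges, contradicting $m\ge\binom{n-2}{2}+3$ and hence establishing the existence of a matching of size $\lfloor n/2\rfloor$.
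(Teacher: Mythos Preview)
Your argument via the Berge--Tutte deficiency formula is correct and self-contained. The paper takes a completely different and much shorter route: it adjoins a universal vertex $v$ to $G$, observes that $G+\{v\}$ is $2$-connected on $n+1\ge 10$ vertices with at least $\binom{n-1}{2}+5$ edges, and then invokes a theorem of Byer and Smeltzer guaranteeing a Hamiltonian cycle in $G+\{v\}$; deleting $v$ gives a Hamiltonian path in $G$, and hence the required matching. So the paper actually proves the stronger statement that $G$ is traceable, at the price of quoting an external Hamiltonicity result, whereas your approach needs only the classical matching theorem and a direct missing-edge count, and also makes transparent why $n\ge 9$ is needed. One small inaccuracy in your commentary: the claim that ``the sole failure is at $n=8$'' is not quite right---for $n=6$ the graph $K_{2,4}$ together with the edge inside the side of size two has $\binom{4}{2}+3=9$ edges, is connected, and has no perfect matching---but this does not affect the proof, since the lemma only asserts the conclusion for $n\ge 9$.
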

\begin{proof} By a theorem in~\cite{ByerSmeltzerDM2007}, a $2$-connected graph
on $n \geq 10$ vertices and $m\geq \binom{n-2}{2} + 5$ edges has a
Hamiltonian cycle. So if we add a new vertex $v$ to $G$ and we join it to all the 
vertices of $G$ we have that $G+\{v\}$ has $m\geq \binom{n-1}{2} + 5$ edges.  Therefore $G+\{v\}$ has a Hamiltonian
cycle, i.e., $G$ has a Hamiltonian path and this implies that there exists a
matching $M$ in $G$ of size $|M|= \lfloor \frac{n}{2} \rfloor$.  
\end{proof}

\begin{lemma}[\cite{leathesis}]\label{matching} Let $G$ be a simple non-coloured graph on $n\geq
14$ vertices. If $m\geq \binom{n-3}{2}+4$ and $\delta(G) \geq 1$, then $G$ has a matching $M$ of size $|M| \geq \lceil \frac{n-2}{2}
\rceil$.  \end{lemma}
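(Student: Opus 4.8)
The plan is to split on the minimum degree of $G$ and, whenever possible, delete a single vertex so as to land inside the hypothesis of Lemma~\ref{matchingsPerfect}. First I record the arithmetic identity $\lceil\frac{n-2}{2}\rceil=\lfloor\frac{n-1}{2}\rfloor$, so the target matching must saturate all but at most two vertices (for $n$ even) or at most one vertex (for $n$ odd). I also rewrite the edge hypothesis as a sparseness statement on the complement: $|E(\overline{G})|=\binom{n}{2}-m\le 3(n-2)-4=3n-10$, which is what will drive the dense case.

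Case $\delta(G)=1$. Let $v$ be a vertex of degree $1$. Deleting a leaf keeps the graph connected, so $G':=G-v$ is a connected graph on $n-1\ge 13\ge 9$ vertices with $m'=m-1\ge \binom{n-3}{2}+3=\binom{(n-1)-2}{2}+3$ edges. Lemma~\ref{matchingsPerfect} then produces a matching of $G'$ of size $\lfloor\frac{n-1}{2}\rfloor=\lceil\frac{n-2}{2}\rceil$, which is also a matching of $G$. This settles the case immediately.

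Case $\delta(G)\ge 2$. Now deleting one vertex removes at least two edges, leaving too few to reapply Lemma~\ref{matchingsPerfect}, so I argue via Tutte--Berge on the deficiency $\mathrm{def}(G)=\max_S\big(o(G-S)-|S|\big)$, where $o(\cdot)$ counts odd components; recall that the matching number satisfies $\nu(G)=\tfrac12\big(n-\mathrm{def}(G)\big)$. It suffices to prove $\mathrm{def}(G)\le 2$, since the parity relation $\mathrm{def}(G)\equiv n \pmod 2$ then forces $\nu(G)\ge\lceil\frac{n-2}{2}\rceil$ in both parities. Suppose instead $\mathrm{def}(G)\ge 3$ and pick a maximizing $S$, so $G-S$ has at least $|S|+3$ components; connectivity (or already $\delta\ge 1$) forces $|S|\ge 1$. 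Every pair of vertices lying in distinct components of $G-S$ is a non-edge, so I bound $|E(\overline{G})|$ from below by the number of such cross-component pairs, which is minimized when one component is as large as possible and the others are singletons. A short computation then gives at least $\tfrac{(|S|+2)(2n-3|S|-3)}{2}$ cross non-edges, whose minimum over the feasible range $|S|\ge 1$ equals $3n-9$, attained at $|S|=1$.

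The main obstacle is precisely this last estimate, because $3n-9$ exceeds the available slack $3n-10$ by only one, so the inequality is genuinely tight and the extremal configurations must be treated honestly. Concretely, I would check that the quadratic $\tfrac{(s+2)(2n-3s-3)}{2}$ stays above $3n-10$ on the whole admissible integer range $1\le s\le\lfloor\frac{n-3}{2}\rfloor$ (its interior behaviour is a downward parabola, so its minimum on this range occurs at an endpoint, and the right endpoint, where the large component has size at most two, yields order $n^2$ non-edges), and that the degenerate configuration realizing exactly $3n-9$ at $s=1$ has three singleton components attached to a single vertex, which is excluded by $\delta(G)\ge 2$ and therefore leaves strict room. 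In every case the number of non-edges strictly exceeds $3n-10$, contradicting $|E(\overline{G})|\le 3n-10$. This forces $\mathrm{def}(G)\le 2$ and completes the proof.
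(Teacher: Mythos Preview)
The paper does not give its own proof of this lemma: it is stated with a citation to \cite{leathesis} and no argument is supplied in the text. So there is nothing in the paper to compare against, and your proof stands on its own merits.

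Your argument is correct. In Case~$\delta(G)=1$ you reduce cleanly to Lemma~\ref{matchingsPerfect}; here you are tacitly using that $G$ is connected (so that removing a leaf keeps the graph connected, which Lemma~\ref{matchingsPerfect} requires). This is consistent with the paper's blanket convention that all (multi)graphs under consideration are connected, but it is worth making explicit, since the lemma as stated does not repeat the hypothesis. In Case~$\delta(G)\ge 2$ the Tutte--Berge computation is sound: the cross-component non-edge count is indeed minimized by one large component plus $|S|+2$ singletons, giving the quadratic $\frac{(s+2)(2n-3s-3)}{2}$, and one checks that on the feasible integer range $1\le s\le\lfloor\frac{n-3}{2}\rfloor$ both endpoints are at least $3n-9$ for $n\ge 14$ (the right endpoint gives $\frac{(n+1)(n+3)}{8}$ or $\frac{n(n+6)}{8}$ according to parity, both $\ge 3n-9$ in this range). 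Since $3n-9>3n-10\ge|E(\overline G)|$, the contradiction follows immediately; your additional remark about $\delta\ge 2$ ruling out the three-singleton configuration at $s=1$ is true but not actually needed.

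One small stylistic point: the sentence ``connectivity (or already $\delta\ge 1$) forces $|S|\ge 1$'' overstates what $\delta\ge 1$ alone gives---it is connectivity that rules out $S=\emptyset$, since $\delta\ge 2$ by itself still permits three disjoint triangles, say. The conclusion is unaffected because the $s=0$ case would yield even more cross non-edges under $\delta\ge 2$, but the parenthetical is misleading as written.
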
 

\begin{lemma}\label{matchings12} Let $G^{c}$ be a $2$-edge-coloured multigraph on $n\geq 14$ vertices coloured with $\{r,b\}$ (red and blue). 
If $rd(G^c)=2$ and $m\geq \binom{n}{2} + \binom{n-3}{2}+4$, then $G^c$ has two
matchings $M^r$ and $M^b$ of colours red and blue respectively, such that $|M^r|=\lfloor
\frac{n}{2} \rfloor$ and $|M^b| \geq \lceil \frac{n-2}{2} \rceil$.  \end{lemma}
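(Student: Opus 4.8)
The plan is to split the edge budget between the two colour classes and apply the appropriate preliminary lemma to each colour. Write $m_r$ and $m_b$ for the numbers of red and blue edges, so $m = m_r + m_b$. Since $G^c$ has no two parallel edges of the same colour, each spanning subgraph $G^r$, $G^b$ is simple, hence $m_r \le \binom{n}{2}$ and $m_b \le \binom{n}{2}$. Moreover, because $rd(G^c)=2$, every vertex is incident with at least one red and at least one blue edge, so $\delta(G^r)\ge 1$ and $\delta(G^b)\ge 1$; this already secures the minimum-degree hypothesis needed for Lemma~\ref{matching}.

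First I would extract the blue matching, which is the easy half. From $m \ge \binom{n}{2}+\binom{n-3}{2}+4$ and $m_r \le \binom{n}{2}$ one gets $m_b \ge \binom{n-3}{2}+4$ (and symmetrically $m_r \ge \binom{n-3}{2}+4$). Since $n \ge 14$, $\delta(G^b)\ge 1$ and $m_b \ge \binom{n-3}{2}+4$, Lemma~\ref{matching} applies verbatim to $G^b$ and yields a blue matching $M^b$ with $|M^b| \ge \lceil \frac{n-2}{2}\rceil$.

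For the red matching I would first arrange, relabelling the two colours if necessary (the hypotheses $rd(G^c)=2$ and the edge bound are symmetric in red and blue, and the conclusion asks for one matching of each colour), that red is the denser colour, i.e.\ $m_r \ge m_b$. Then $m_r \ge \frac{1}{2}(m_r+m_b) \ge \frac{1}{2}\bigl(\binom{n}{2}+\binom{n-3}{2}+4\bigr)$, and a short computation shows this lower bound equals $\binom{n-2}{2}+3+\frac{n-2}{2}$, so in particular $m_r \ge \binom{n-2}{2}+3$ for all $n \ge 14$. To invoke Lemma~\ref{matchingsPerfect} on $G^r$ it remains only to check that $G^r$ is connected, and this is the step I expect to be the real obstacle, since a dense graph need not be connected in general.

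The connectivity is in fact forced by combining the edge count with $\delta(G^r)\ge 1$. The key observation I would establish is: a simple graph on $n$ vertices with no isolated vertex and more than $\binom{n-2}{2}+1$ edges is connected. Indeed, if such a graph were disconnected then, having no isolated vertex, every component would have at least two vertices; and among all partitions of the $n$ vertices into at least two classes each of size $\ge 2$, the quantity $\sum_i \binom{n_i}{2}$ is maximised by the sizes $n-2$ and $2$, which follows from the strict superadditivity $\binom{a+b}{2} = \binom{a}{2}+\binom{b}{2}+ab$. Hence a disconnected graph with no isolated vertex has at most $\binom{n-2}{2}+1$ edges, contradicting $m_r \ge \binom{n-2}{2}+3$. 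Therefore $G^r$ is connected, and Lemma~\ref{matchingsPerfect} (valid since $n \ge 14 \ge 9$ and $m_r \ge \binom{n-2}{2}+3$) produces a red matching $M^r$ with $|M^r| = \lfloor \frac{n}{2}\rfloor$. Together with $M^b$ this gives the two required matchings.
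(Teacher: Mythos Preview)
Your proof is correct and follows essentially the same route as the paper: bound each colour class below, apply Lemma~\ref{matching} to the sparser colour, and apply Lemma~\ref{matchingsPerfect} to the denser one after checking connectivity. Two small differences worth noting: the paper splits into parity cases (using Lemma~\ref{matching} for both colours when $n$ is odd), whereas you handle both parities uniformly via Lemma~\ref{matchingsPerfect}; and you supply the connectivity argument for $G^r$ explicitly, which the paper simply asserts from $\delta(G^r)\ge 1$ together with the edge count.
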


\begin{proof} Let $E^{r}$ and $E^{b}$ denote the set of edges
coloured in red and blue respectively. Set $|E^{r}| =m^{r}$ and $|E^{b}|
=m^{b}$. Observe that, as for every vertex $x$ in $G^c$,
$rd(x)=2$, we have that $d^i(x) \geq 1$ for $i\in \left\{ r,b\right\}$. Observe
also that $m^{i}\geq \binom{n-3}{2}+4$ for $i\in \left\{ r,b\right\}$,
since this threshold is tight when the multigraph is complete on one of the
colours.

Now, if $n$ is odd, by Lemma~\ref{matching} there exist two
matchings $M^{r}$ and $M^{b}$, each one of size $\frac{n-1}{2}$, so the result follows
straightforward. Next, if $n$ is even, suppose without loss of generality that $m^{r} \geq m^{b}$.
Then $m^{r} \geq (\binom{n}{2} + \binom{n-3}{2}+4)/2 > \binom{n-2}{2}+3$. It is sufficient to show that 
$G^{r}$ has a matching of size $\lfloor \frac{n}{2} \rfloor$ because $G^{b}$ has one of size $\lceil \frac{n-2}{2} \rceil$ by
Lemma~\ref{matching}. Since $\delta(G^{r}) \geq 1$, $G^{r}$ is connected, thus, Lemma~\ref{matchingsPerfect} implies that 
$G^{r}$ has a matching of size $\lfloor\frac{n}{2} \rfloor$ as desired.
\end{proof}

\begin{lemma}\label{edges_without_cycle} Let $G^c$ be a connected
$c$-edge-coloured multigraph, $c\geq 2$. Suppose that $G^c$ contains a proper path
$P=x_1y_1x_2y_2\ldots x_py_p$, $p\geq 3$, such that each edge $x_iy_i$ is red.
If $G^c$ does not contain a proper cycle $C$ such that $V(C)=V(P)$ then there are at least $(c-1)(2p-2)$
missing edges in $G^c$.  \end{lemma}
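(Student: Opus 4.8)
The plan is to run a rotation/crossing argument anchored at the two endpoints $x_1$ and $y_p$ of $P$, and to read off forced missing edges from the failure of each possible way of closing $P$ into a cycle on $V(P)$.

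First I would enumerate $P = v_1v_2\cdots v_{2p}$ with $v_{2i-1}=x_i$ and $v_{2i}=y_i$, and record the two facts that drive everything: the extreme edges $v_1v_2=x_1y_1$ and $v_{2p-1}v_{2p}=x_py_p$ are red, and the endpoints are $x_1,y_p$. The basic move is, for a breakpoint $k$ with $2\le k\le 2p-2$, to delete the path edge $v_kv_{k+1}$ and insert the two chords $v_1v_{k+1}$ and $v_kv_{2p}$; traversing $v_1,v_{k+1},\dots,v_{2p},v_k,\dots,v_2,v_1$ produces a cycle on $V(P)$ using every path edge except $v_kv_{k+1}$. This cycle is proper precisely when $v_1v_{k+1}$ carries a colour different from red and from $c(v_{k+1}v_{k+2})$, and $v_kv_{2p}$ carries a colour different from red and from $c(v_{k-1}v_k)$. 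Together with the direct closure $x_1y_p$ (which must avoid red at both ends), this gives $2p-2$ closure opportunities: the $2p-3$ breakpoints and the direct closure. Since no proper cycle on $V(P)$ exists, each opportunity fails, so for each one at least one of its chords is unusable, i.e.\ every colour that would make it admissible is absent on that pair.

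Next I would collect the forced-missing pairs and argue they are pairwise distinct. At an even breakpoint $k=2i$ both neighbouring path edges $v_{k-1}v_k$ and $v_{k+1}v_{k+2}$ are red, so a chord is unusable only if all $c-1$ non-red colours are missing; this charges $c-1$ missing edges to $(x_1,x_{i+1})$ or to $(y_i,y_p)$. The direct closure charges $c-1$ missing edges to $(x_1,y_p)$. At an odd breakpoint $k=2i-1$ the two chords are $x_1y_i$ and $x_iy_p$, whose pairs are of a different type ($x_1$ to a $y$-vertex, or an $x$-vertex to $y_p$) than the even ones. Because the pair-sets produced by distinct opportunities never coincide, the failures force at least $2p-2$ distinct empty pairs; if each certified $c-1$ missing edges we would obtain exactly $(c-1)(2p-2)$.

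The main obstacle is the odd breakpoints. There a chord such as $x_1y_i$ is adjacent, at $y_i$, to the connector $y_ix_{i+1}$, whose colour $c_i$ is non-red; a single crossing therefore only forbids the two colours $\{\mathrm{red},c_i\}$ and certifies merely $c-2$ missing edges on $(x_1,y_i)$, leaving the count short by one colour on each of these $p-2$ pairs. To recover the last colour I would use rotation--extension: before closing, perform an admissible rotation that turns $y_i$ (resp.\ $x_i$) into a genuine endpoint whose terminal edge is its red edge $x_iy_i$, so that the subsequent closure along $x_1y_i$ need only differ from red, whence its failure forbids all $c-1$ non-red colours. Verifying that such a rotation can always be carried out while keeping every intermediate path properly coloured, creating no proper cycle en route, and keeping all charged pairs distinct, is the delicate part of the argument; granting it, summing $c-1$ over the $2p-2$ distinct pairs yields the stated bound of $(c-1)(2p-2)$ missing edges.
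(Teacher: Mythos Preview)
Your pair-distinctness accounting is correct, and at the $p-1$ even breakpoints and the direct closure the argument is clean: deleting a connector $y_ix_{i+1}$ forces one of the pairs $(x_1,x_{i+1}),(y_i,y_p)$ to lack every non-red colour, so each such opportunity genuinely contributes $c-1$ missing edges. The gap is exactly where you place it, and it is not a detail. At an odd breakpoint $k=2i-1$ the deleted edge is the red edge $x_iy_i$; the inserted chord $x_1y_i$ meets the connector $y_ix_{i+1}$ (of some non-red colour $c_i$) at $y_i$, so a single crossing forbids only the colours $\{\text{red},c_i\}$ on $(x_1,y_i)$, and symmetrically for $(x_i,y_p)$. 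Hence each of these $p-2$ opportunities certifies only $c-2$ missing edges, and for $c=2$ it certifies nothing, giving a total of $p$ instead of $2p-2$.

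The rotation you sketch to repair this does not obviously exist. A rotation at $x_1$ through a non-red chord $x_1x_j$ produces a new endpoint $y_{j-1}$ whose terminal edge is the connector $y_{j-1}x_j$, which is non-red; there is no simple manoeuvre that makes a $y$-vertex an endpoint with a red terminal edge while keeping the path proper on $V(P)$, and you would also have to argue that no intermediate rotation itself closes into a proper cycle (which would terminate the lemma, not the count) and that the resulting charged pairs remain the ones you have already declared distinct. ``Granting it'' is precisely the step that needs to be proved.

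The paper sidesteps the whole difficulty by reversing the order of quantifiers: it fixes one non-red colour $\beta$ and produces $2p-2$ missing $\beta$-edges, then multiplies by $c-1$. The even breakpoints and the direct closure give $p$ missing $\beta$-edges among the pairs $(x_1,x_i),(y_{i-1},y_p),(x_1,y_p)$, exactly as in your argument restricted to a single colour. For the remaining $p-2$ edges, the paper does \emph{not} cross at the red edge $x_iy_i$; instead, assuming both $\beta$-edges $x_1y_i$ and $x_{i+1}y_p$ are present, it assembles longer proper cycles on $V(P)$ that additionally use $\beta$-edges such as $x_iy_{i+1},\,y_{i-1}x_{i+2}$ (or $y_{i-1}y_{i+1},\,x_ix_{i+2}$), forcing a contradiction and hence showing that for each $i=2,\dots,p-2$ one of $x_1y_i,\,x_{i+1}y_p$ lacks colour $\beta$. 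This per-colour argument is weaker per pair (one colour, not $c-1$) but works for every $\beta$, and crucially the charged pair is allowed to vary with $\beta$; summing over colours yields $(c-1)(2p-2)$.
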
 

\begin{proof} We show that there are at least $2p-2$ missing edges in $G^c$ per colour different from red.
As there are $c-1$ such colours the total number of missing edges will be $(c-1)(2p-2)$ as claimed.
Let us consider some colour, say blue, different from red. 
The blue edge $x_1y_p$ cannot be in $G^c$ otherwise $x_1y_1\ldots x_py_px_1$ is
a proper cycle. Suppose that the blue edge $x_1x_i$ is present in $G^c$ for some $i=2,\ldots,p$.
Then the blue edge $y_{i-1}y_p$ cannot be in $G^c$ otherwise the
proper cycle $x_1x_i\ldots y_py_{i-1}\ldots x_1$ contradicts our
hypothesis. Therefore for each edge $y_{i-1}x_i$ either the blue
edge $x_1x_i$ or the blue edge $y_{i-1}y_p$ is missing. So there are
$p-1$ blue missing edges in $G^c$.
Now suppose that the blue edge $x_1y_i$ is present in $G^c$, for some
$i=2,\ldots,p-2$. Then the blue edge $x_{i+1}y_p$ cannot be together with the blue edges $x_iy_{i+1}$, $y_{i-1}x_{i+2}$ or 
$y_{i-1}y_{i+1}$, $x_{i}x_{i+2}$ in $G^c$,
otherwise the proper cycles $x_1y_ix_iy_{i+1}x_{i+1}y_p\ldots
x_{i+2}y_{i-1} \ldots x_1$ or $x_1y_ix_{i}x_{i+2}\ldots
y_px_{i+1}y_{i+1}y_{i-1} \ldots x_1$ contradict again our hypothesis. Then for each edge $y_ix_{i+1}$, at least one of the edges 
$x_{i+1}y_p$, $x_1y_i$ is missing in $G^c$ for $i=2,\ldots,p-2$. Therefore there are $p-3$ blue missing edges.

Up to now we have $2p-3$ blue missing edges. To obtain the last missing
edge observe that one of the blue edges $x_2y_p$, $x_1y_2$, $y_1x_3$ ($x_2y_p$, $x_1x_3$, $y_1y_2$) is missing in $G^c$, otherwise
we obtain the proper cycle $x_1y_2x_2y_p\ldots x_3y_1x_1$ ($x_1x_3\ldots y_px_2y_2y_1x_1$).
We remark that the blue edges $x_2y_p$ and $y_1x_3$ ($y_1y_2$) were not counted before. The edge $x_1y_2$ ($x_1x_3$) was
supposed to exist, otherwise, to obtain the last missing edge we consider the symmetric case, i.e., using the blue edge $x_1y_{p-1}$ (if it exists).

In conclusion there are $2p-2$ blue missing edges in $G^c$ as required. This completes the argument and the proof.
\end{proof}

\begin{lemma}\label{missing_edges_matching} Let $G^c$ be a connected
$c$-edge-coloured multigraph, $c\geq 2$. Let $M$ be a matching of $G^c$ in one colour, say red, of size $|M| \geq \lceil \frac{n-2}{2} \rceil$.  Let
$P=x_1y_1x_2y_2\ldots x_py_p$, $p\geq 3$, be a longest proper path compatible
with $M$. Let $f(n,c)$ denote the minimum number of missing edges in $G^c$ on colours different from red.
Then the following holds: 
\vspace*{-4mm}
\[ f(n,c) = \left\{ 
  \begin{array}{l l}
    (2n-4)(c-1) & \quad \text{if $n$ is even, $|M|=\frac{n}{2}$ and $2p < n$}\\
    (2n-6)(c-1) & \quad \text{if $n$ is odd, $|M|=\frac{n-1}{2}$ and $2p < n-1$}\\
    (2n-8)(c-1) & \quad \text{if $n$ is even, $|M|=\frac{n-2}{2}$ and $2p < n-2$}
  \end{array} \right. \]
\end{lemma}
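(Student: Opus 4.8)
The plan is to fix the longest $M$-compatible proper path $P=x_1y_1\cdots x_py_p$, whose $p$ red edges $x_iy_i$ lie in $M$, and to exploit its maximality. Since every vertex on $P$ is $M$-saturated by its partner on $P$, each of the three hypotheses forces $p<|M|$, i.e. there is at least one matching edge $zw$ with $z,w\notin V(P)$. Write $q=|M|-p\ge 1$ for the number of such off-path matching edges, so that the off-path vertices split into $2q$ saturated ones (each matched to another off-path vertex) and $0$, $1$ or $2$ unsaturated ones in the three respective cases. I would bound the number of missing non-red edges from below by splitting on whether $V(P)$ carries a proper $M$-alternating cycle, and then observe that the bound is minimised when $P$ is as long as possible.

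Suppose first that some proper $M$-alternating cycle $C$ with $V(C)=V(P)$ exists. Deleting any one of its non-red edges produces an $M$-compatible proper spanning path of $V(P)$ whose two endpoints still carry their incident red $M$-edge, and as the deleted edge ranges over the $p$ non-red edges of $C$ these endpoints exhaust $V(P)$. Consequently, if a saturated off-path vertex $z$ had any non-red edge to some $u\in V(P)$, I could open $C$ at the non-red edge incident to $u$ and append $u\,z\,w$ with $zw\in M$, obtaining an $M$-compatible proper path on $2p+2$ vertices and contradicting maximality. Hence all $2q\cdot 2p$ non-red edges between the saturated off-path vertices and $V(P)$ are missing, giving at least $4pq(c-1)$ missing edges; since $p+q=|M|$ with $p\ge 3$, $q\ge 1$, the product $pq$ is smallest at the extremes $q=1$ and $p=3$, and a one-line check shows $4pq(c-1)$ is there at least the claimed value.

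Now suppose $V(P)$ carries no proper $M$-alternating cycle. I collect missing edges from three sources that I will argue are pairwise edge-disjoint. First, Lemma~\ref{edges_without_cycle} with red as the matching colour yields $(2p-2)(c-1)$ missing edges inside $V(P)$ (its proof only constructs $M$-alternating cycles, so the present hypothesis suffices). Second, non-extendability at the two ends forces, for each saturated off-path vertex $v$, all non-red edges $x_1v$ and $y_pv$ to be missing, since otherwise $v$ and its $M$-partner could be prepended at $x_1$ or appended at $y_p$; this contributes $4q(c-1)$ edges. Third, for each off-path edge $zw$ and each interior gap $(y_i,x_{i+1})$, $1\le i\le p-1$, neither $y_i\,z\,w\,x_{i+1}$ nor $y_i\,w\,z\,x_{i+1}$ may be inserted, and blocking both insertions forces at least $2(c-1)$ missing edges among the four pairs $\{y_i,x_{i+1}\}\times\{z,w\}$, for a total of $2q(p-1)(c-1)$. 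Because $x_1$ and $y_p$ never occur as an interior gap vertex and the first source involves no off-path vertex, the three families are disjoint and their sizes add.

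Summing the three sources gives $\bigl(2p-2+2(|M|-p)(p+1)\bigr)(c-1)=\bigl(-2p^{2}+2|M|\,p+2|M|-2\bigr)(c-1)$, a downward parabola in $p$; on the admissible range $3\le p\le |M|-1$ its minimum is attained at $p=|M|-1$ (where $q=1$), and there it equals $4(|M|-1)(c-1)$, that is $(2n-4)(c-1)$, $(2n-6)(c-1)$, $(2n-8)(c-1)$ according as $|M|=\tfrac n2,\ \tfrac{n-1}2,\ \tfrac{n-2}2$; the three cases differ only through $|M|$, reflecting that the $0$, $1$ or $2$ unsaturated off-path vertices take part in neither the extension nor the insertion argument. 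I expect the bookkeeping in the no-cycle branch to be the main obstacle: one must verify that the per-gap bound of $2(c-1)$ is genuinely forced simultaneously over all non-red colours, and that the three edge-families stay disjoint across all gaps, all off-path matching edges and both endpoints, so the counts add without overlap. A secondary subtlety is that Lemma~\ref{edges_without_cycle} must be read through its proof, since only $M$-alternating cycles need be excluded here.
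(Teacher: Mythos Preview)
Your proposal is correct and follows essentially the same approach as the paper: the identical case split on whether $V(P)$ spans a proper cycle, the same three sources of missing edges in the no-cycle branch (Lemma~\ref{edges_without_cycle} inside $V(P)$, non-extendability at $x_1,y_p$, and the per-gap insertion bound), and the same minimisation over $p$. Your write-up is in fact a bit tidier than the paper's---you unify the three cases through the single parameter $|M|$ rather than treating them separately, and you are explicit that the cycle must be $M$-alternating (which the paper's extension argument also needs but leaves implicit).
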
 
\begin{proof} Here we consider only the case $n$ is even, $|M|=\frac{n}{2}$ and $2p < n$, as the two other cases are similar.
Observe that, as the red matching $M$ has $\frac{n}{2}$ edges and by hypothesis $P$ uses $p$ edges of $M$, 
there are precisely $\frac{n-2p}{2}$ edges of $M$ in $G^c-P$. Let us denote these edges by
$e_i=w_iz_i$, $w_i,z_i \in G^c-P$, $i=1,\ldots,\frac{n-2p}{2}$. 

Suppose first that there is no proper $C$ cycle such that $V(C)=V(P)$.
Let blue be some colour different from red. By Lemma~\ref{edges_without_cycle} there are
$2p-2$ blue missing edges in the subgraph induced by $V(P)$.  Furthermore 
there are no blue edges between the vertices $x_1,y_p$ and the endpoints of every edge $e_i$. 
Otherwise if such an edge exists for some $i$, say $x_1w_i$, then the path $z_iw_ix_1y_1\ldots x_py_p$ contradicts
the maximality property of $P$.
Thus, there are $2(n-2p)$ blue missing edges. In adittion, for each edge 
$y_jx_{j+1}$, $j=1,\ldots,p-1$, at least two of the blue edges $y_jw_i$, $y_jz_i$, $x_{j+1}w_i$ and $x_{j+1}z_i$ are missing 
in $G^c$, otherwise if at least three among them exist, we can easily find a path longer than $P$, a contradiction.  
So, in this case there are $(n-2p)(p-1)$ blue missing edges.
Summing up we obtain  $(n-2+pn-2p^2)$ blue missing edges in $G^c$. As there are $c-1$ colours different
from red, we finally have a total of $(n-2+pn-2p^2)(c-1)$ missing edges in $G^c$. 
For $n$ and $c$ fixed, the minimum value of this function is obtained for $p=\frac{n-2}{2}$. Thus 
$f(n,c)= [n-2+\frac{n-2}{2}n-2(\frac{n-2}{2})^2](c-1)=(2n-4)(c-1)$ as required. 

Suppose next that there is a proper cycle $C$ such that $V(C)=V(P)$.
Then every edge (if any) between a vertex of $C$ and the endpoints of the edges $e_i=w_iz_i$ should be red. 
Otherwise if such a non red edge exists, say $x_jw_i$ for some $i$ and $j$, $x_i \in C$, then appropriately using the segment $x_jw_iz_i$ along 
with $C$, we may find a path longer than $P$, a contradiction.
Therefore there are at least $(2pn-4p^2)(c-1)$ missing edges in $G^c$.
Again, by minimizing the function we obtain $f(n,c)=(2n-4)(c-1)$ for $p=\frac{n-2}{2}$.
\end{proof}

\section{2-edge-coloured multigraphs}\label{2_edge_col}

In this section we study the existence of proper Hamiltonian paths in $2$-edge-coloured 
multigraphs. We present two main results. The first one involves the number of edges. 
The second one involves both the number of edges and the rainbow degree. All results are tight.

\begin{theorem}\label{s2} Let $G^c$ be a $2$-edge-coloured multigraph on $n \geq 8$ vertices coloured with $\{r,b\}$. 
If $m\geq \binom{n}{2} + \binom{n-2}{2} + 1$, then $G^c$ has a proper Hamiltonian path.  
\end{theorem}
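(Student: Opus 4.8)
The plan is to argue by contradiction. Suppose $G^c$ has no proper Hamiltonian path while $m \geq \binom{n}{2} + \binom{n-2}{2} + 1$. Since a rainbow complete $2$-coloured multigraph has $2\binom{n}{2}$ edges, the number of missing edges of $G^c$ is at most $2\binom{n}{2} - m \leq \binom{n}{2} - \binom{n-2}{2} - 1 = 2n - 4$. Write $f_r$ and $f_b$ for the numbers of missing red and blue edges, so that $f_r + f_b \leq 2n - 4$; without loss of generality assume $f_r \leq f_b$, hence $f_r \leq n - 2$.

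First I would secure a large red matching. Because $f_r \leq n - 2 < n - 1$, and any disconnection of $G^r$ into parts of sizes $a$ and $n - a$ would force at least $a(n-a) \geq n - 1$ missing red edges, the graph $G^r$ is connected; moreover $m^r = \binom{n}{2} - f_r \geq \binom{n-2}{2} + (n - 1) \geq \binom{n-2}{2} + 3$. Thus, for $n \geq 9$, Lemma~\ref{matchingsPerfect} yields a red matching $M$ with $|M| = \lfloor n/2\rfloor$ (this covers both parities, giving $(n-1)/2$ when $n$ is odd). I would then take $P$ to be a longest proper path compatible with $M$; if $P$ is Hamiltonian we are done, so assume it is not.

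Now I apply Lemma~\ref{missing_edges_matching} to $M$ and $P$. For $n$ even, $|M| = n/2$ and $P$ non-Hamiltonian means $2p < n$, so the first case gives at least $2n - 4$ missing edges in colours other than red, i.e.\ $f_b \geq 2n - 4$. Combined with $f_r + f_b \leq 2n - 4$ this forces $f_r = 0$ and $f_b = 2n - 4$: the red graph is complete and the missing blue edges occur in exactly the extremal pattern described in Lemmas~\ref{edges_without_cycle} and~\ref{missing_edges_matching}. For $n$ odd the analogous second case gives $f_b \geq 2n - 6$ and $f_r \leq 2$, a near-complete red graph. In particular, the proper-cycle subcase of Lemma~\ref{missing_edges_matching} cannot occur here, since it would force both uncovered vertices to have all their non-red edges missing, contributing at least $2(n-1) - 1 = 2n - 3$ missing blue edges, exceeding the budget.

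The crux, and the step I expect to be the main obstacle, is to rule out this tight configuration: the counting alone is not enough, because the hypothesis leaves no slack, the lower bound $2n - 4$ being met with equality. I would exploit the fact that red is complete, so that every red perfect matching is available, in order to re-choose the matching and splice the uncovered vertices into $P$ through the blue edges that the extremal pattern still permits, explicitly producing a proper Hamiltonian path and contradicting the assumption. Finally I would dispose of the rainbow-degree complication—at most one vertex can have rainbow degree $1$, for two such vertices would already force at least $2n - 3$ missing edges, and such a vertex must be a path endpoint—and treat the remaining small values, notably $n = 8$ and the odd-parity boundary, by a direct \emph{ad hoc} construction, since Lemma~\ref{matchingsPerfect} requires $n \geq 9$.
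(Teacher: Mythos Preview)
Your approach differs fundamentally from the paper's. The paper argues by induction on $n$: after treating $n=8,9$ directly, it invokes a degree theorem from~\cite{Abouelaoualim2010} to locate a vertex $x$ with (say) $d^r(x)\le\lceil(n+1)/2\rceil-1$, chooses distinct neighbours $y,z$ with $c(xy)=b$ and $c(xz)=r$, and \emph{contracts} $\{x,y,z\}$ to a single new vertex $s$ with $N^r(s)=N^r_{G^c-\{x,z\}}(y)$ and $N^b(s)=N^b_{G^c-\{x,y\}}(z)$. An edge count shows the resulting $(n-2)$-vertex multigraph still meets the threshold, so induction gives a proper Hamiltonian path there which unfolds to one in $G^c$. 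The residual situations ($y=z$, or $x$ monochromatic) are disposed of by short separate arguments. Matchings and compatible paths play no role whatsoever.

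Your matching-and-counting strategy is precisely the machinery the paper reserves for Theorem~\ref{rd2}, where the extra hypothesis $rd(G^c)=2$ and the looser budget ($3n-10$ missing edges rather than $2n-4$) give the counting enough slack to close. Here you have correctly diagnosed that the bound saturates: Lemma~\ref{missing_edges_matching} delivers only $f_b\ge 2n-4$, forcing equality rather than a contradiction. The step you flag as ``the main obstacle'' genuinely is one. Already in Theorem~\ref{rd2} the analogous tight point (Case~A, $2p=2p'=n-2$) requires real casework, and the hypothesis $rd(G^c)=2$ is used essentially there---for instance to guarantee a blue edge between the two uncovered vertices when no blue edge from them reaches the cycle. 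Without that hypothesis you cannot make the same move, and ``re-choose the matching and splice'' is not yet an argument: you would need to classify the configurations of $2n-4$ missing blue edges over a complete red graph that block every proper Hamiltonian path, and show none exist. The odd subcase $2p=n-1$ (where Lemma~\ref{missing_edges_matching} does not even apply) is likewise deferred rather than handled. The route may be completable, but what remains is the heart of the proof, not a tidying-up; the paper's contraction-and-induction sidesteps this entirely.
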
 
For the extremal example, $n \geq 8$, consider a
rainbow complete $2$-edge-coloured multigraph on $n-2$ vertices, $n$ odd. Add
two new vertices $x_1$ and $x_2$. Then add a red edge $x_1x_2$ and all red
edges between $\{x_1,x_2\}$ and the complete graph. Although the resulting
graph has $\binom{n}{2} + \binom{n-2}{2}$ edges, it has no proper Hamiltonian path, since there is no blue matching of size $(n-1)/2$. 

\begin{proof} By induction on $n$. For $n=8,9$ by a rather tedious but easy analysis the result can be shown.
Suppose now that $n\geq 10$. As $G^c$ has at least $\binom{n}{2} + \binom{n-2}{2}$ edges then $|E(\overline{G^c})| \leq 2n-4$. 
By a theorem in~\cite{Abouelaoualim2010}, if every vertex $x \in G^c$ has $d^r(x)\geq \left\lceil \frac {n+1}{2} \right\rceil $ and
$d^b(x)\geq \left\lceil \frac {n+1}{2}\right\rceil $, then $G^c$ has a proper
Hamiltonian path. Thus, we can assume that there exists a vertex $x \in G^c$ such that $d^r(x)\leq \lceil \frac {n+1}{2}\rceil -1$, otherwise 
there is nothing to prove.

Suppose first that there exist two distinct neighbours $y,z$ of $x$ such that $c(xy)=b$ and $c(xz)=r$. We then construct 
a new multigraph $G'^c$ by replacing the vertices $x,y,z$ to a new vertex $s$ such that $N^r(s)=N_{G^c-\{x,z\}}^r(y)$ and 
$N^b(s)=N_{G^c-\{x,y\}}^b(z)$. We remark that $N^r_{G^c-\{x,z\}}(y)$ and $N^b_{G^c-\{x,y\}}(z)$ cannot both be empty, otherwise 
$|E(\overline{G^c})| \geq 3n-5-\lceil \frac {n+1}{2}\rceil > 2n-4$, a contradiction.
By doing, in the worst case we remove at most $n-1$ blue and $\lceil \frac {n+1}{2}\rceil -1$ red edges from $x$, $n-3$ blue edges 
from $y$, $n-3$ red edges from $z$ and one red and one blue between $y$ and $z$. Therefore $G'^c$ has at least 
$\binom{n}{2} + \binom{n-2}{2} + 1 - (n-1) -(\lceil \frac {n+1}{2}\rceil-1) - 2(n-3) - 2  \geq \binom{n-2}{2} + \binom{n-4}{2}+1$ edges.
Thus by induction, $G'^c$ has a proper Hamiltonian path $P$. 
From this path $P$ we can easily obtain a proper Hamiltonian path in $G^c$. 

Suppose now that there does not exist two distinct neighbours $y,z$ of $x$ such that $c(xy)=b$ and $c(xz)=r$. 
Suppose first that both $y$ and $z$ exist but they are not distinct, i.e., $y=z$.
In this case, it is easy to observe that $G^c-\{x\}$ has $(n-1)(n-2)$ edges, i.e., 
it is a rainbow complete multigraph. Therefore, it contais a proper Hamiltonian path starting at $y$. This path can be easily 
extended to a proper Hamiltonian path of $G^c$ by adding one of the edges $xy$ in the appropriate colour. 
Suppose next that all edges incident to $x$ are on the same colour, say $b$. Observe that for every vertex $w \neq x$, 
there exists at least one red edge $wu$, $u \in G^c-\{x,w\}$, otherwise $|E(\overline{G^c})| \geq 2n-3 > 2n-4$, which is a contradiction. 
In the following we distinguish between to cases depending on the neighbourhood of $x$.
Assume first that $N^b(x) \leq n-2$. Consider a neighbour $y$ of $x$ and remove all its blue incident edges. Then remove 
$x$ from $G^c$ and call this multigraph $G'^c$. In $G'^c$, $y$ is monochromatic in red and $G'^c$ has at least 
$\binom{n-1}{2} + \binom{n-3}{2}+1$ edges. Thus by the inductive hypothesis, $G'^c$ 
has a proper Hamiltonian path. This path starts at $y$ since it was monochromatic. So we have a proper Hamiltonian path in $G^c$. 
Assume next that $N^b(x)=n-1$. If for some neighbour $y$ of $x$, $N^b(y)\leq n-3$, we complete the argument as before. 
Otherwise for every vertex $y$, $N^b(y)= n-2$. It follows that the underlying blue subgraph $G'^b$ of $G'^c=G^c-\{x\}$ is complete. Furthermore, $G'^c$ 
has at least $n^2 -4n+5$ edges. Now remove all the blue edges from $G'^c$. This new (red) graph has $n-1$ vertices and at least 
$\binom{n-2}{2}+1$ edges. Therefore by a theorem in~\cite{ByerSmeltzerDM2007}, it has a Hamiltonian path $P$. Now since $G'^b$ is complete, 
we can appropriately use some blue edges of $G'^b$ along with the edges of $P$ to define a proper Hamiltonian path $P'$ in $G'^c$ that always 
starts with an edge on colour red. Finally, we can join $x$ to the first vertex of $P'$ in order to obtain a proper Hamiltonian path in $G^c$.
\end{proof}

\begin{theorem}\label{rd2} Let $G^c$ be a $2$-edge-coloured multigraph on $n \geq 14$ vertices coloured with $\{r,b\}$. 
If $rd(G^c)=2$ and $m\geq \binom{n}{2} + \binom{n-3}{2}+4$, then $G^c$ has a proper Hamiltonian path.  
\end{theorem}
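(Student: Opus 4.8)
The plan is to combine the two large matchings produced by Lemma~\ref{matchings12} with the missing-edge accounting of Lemma~\ref{missing_edges_matching}, playing the two colours off against each other. First I would fix the edge budget: since a rainbow complete $2$-edge-coloured multigraph has $2\binom{n}{2}$ coloured edges, the hypothesis $m \geq \binom{n}{2}+\binom{n-3}{2}+4$ gives
\[
|E(\overline{G^c})| \;\leq\; 2\binom{n}{2} - \binom{n}{2} - \binom{n-3}{2} - 4 \;=\; 3n-10 .
\]
Then, since $rd(G^c)=2$ and the density hypothesis of Lemma~\ref{matchings12} is met, that lemma supplies a red matching $M^r$ with $|M^r|=\lfloor \tfrac n2\rfloor$ and a blue matching $M^b$ with $|M^b|\geq\lceil\tfrac{n-2}2\rceil$.

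The heart of the argument is a symmetric, two-colour application of Lemma~\ref{missing_edges_matching}. I would take a longest proper path $P_r$ compatible with $M^r$ and a longest proper path $P_b$ compatible with $M^b$, and claim that at least one of them uses every edge of its matching. Indeed, if neither does, then each satisfies the strict hypothesis $2p<2|M|$ of Lemma~\ref{missing_edges_matching} in the appropriate parity/size case. Applying the lemma to $M^r$ bounds from below the number of \emph{blue} missing edges, and, by symmetry (swapping the roles of the two colours), applying it to $M^b$ bounds from below the number of \emph{red} missing edges; since these two colour classes are disjoint, the bounds add. In every configuration the resulting sum is at least $4n-12$ (for example $(2n-4)+(2n-8)$ when $n$ is even with $|M^b|=\tfrac{n-2}2$, and $(2n-6)+(2n-6)$ when $n$ is odd), and $4n-12>3n-10$ for $n\geq 14$, contradicting the budget above. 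Hence some longest compatible path, say compatible with a matching $M$, uses all $|M|$ of its edges and therefore covers $2|M|$ vertices.

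It remains to upgrade this almost-spanning alternating path into a genuine proper Hamiltonian path. When $n$ is even and the exhausted matching is perfect, the path already covers all $n$ vertices and we are done. Otherwise $2|M|\in\{n-1,n-2\}$, so one or two vertices lie outside the path; here I would exploit that $rd(G^c)=2$ forces every leftover vertex to be incident to both a red and a blue edge, and that the bound $3n-10$ keeps the multigraph extremely dense, in order to insert each leftover vertex either at an endpoint or between two consecutive path vertices with a compatible colour pattern. As in the earlier lemmas, the failure of every insertion site would itself create a linear number of fresh missing edges, again overshooting $3n-10$.

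The main obstacle is exactly this finishing step: the clean contradiction only delivers an alternating path missing one or two vertices, and the reinsertion must respect the colour-alternation at the splice points so that properness is preserved. The delicate sub-case is $n$ even with $|M^b|=\tfrac{n-2}2$ and $P_r$ non-Hamiltonian, where two vertices must be reinserted simultaneously; controlling how their two available incident colours interact with the endpoints (and interior) of the path, while staying inside the missing-edge budget, is where the careful case analysis will concentrate.
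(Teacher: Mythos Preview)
Your plan coincides with the paper's proof through the main reduction: the paper also computes $|E(\overline{G^c})|\le 3n-10$, invokes Lemma~\ref{matchings12}, takes a longest compatible proper path for each matching, and applies Lemma~\ref{missing_edges_matching} to both colours to force $2p,2p'$ up to $n-1$ or $n-2$ (else the combined count of missing edges already exceeds $3n-10$).

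Where you diverge is the finishing step. You propose to insert the one or two leftover vertices directly into a single path and count obstructions. The paper instead keeps \emph{both} paths $P$ and $P'$ in play and cross-counts. Concretely (for $n$ even, $2p=2p'=n-2$): if $V(P)$ carries a proper cycle, the leftover red matching edge together with $rd=2$ and connectivity immediately yields a proper Hamiltonian path; if not, Lemma~\ref{missing_edges_matching} already gives $2n-4$ missing blue edges from $P$, and then the paper turns to $P'$ and extracts at least $n-4$ missing \emph{red} edges---either via Lemma~\ref{edges_without_cycle} when $V(P')$ carries no proper cycle, or by a short pairing argument on the cycle when it does---so the total $3n-8$ again beats $3n-10$. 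The odd case is handled analogously, with an extra $\tfrac{n-3}{2}$ missing edges squeezed out of the interaction between $x_1$, $v$ and the interior of $P$. Your direct-insertion idea may well be workable, but the paper's two-path cross-count is what makes the arithmetic close cleanly at $3n-8$ and $3n-9$.
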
 
For the extremal example, $n\geq 14$ odd, consider a complete blue graph, say $A$, on $n-3$ vertices. Add three
new vertices $v_1,v_2,v_3$ and join them to a same vertex $v$ in $A$ with blue
edges.  Finally, superpose the obtained graph with a complete red graph on $n$
vertices. Although the resulting $2$-edge-coloured multigraph has $\binom{n}{2} + \binom{n-3}{2}+3$ edges, it has no proper Hamiltonian path 
since one of the vertices $v_1,v_2,v_3$ cannot belong to such a path. 
\begin{proof} Let us suppose that $G^c$ does not have a proper Hamiltonian path. We will show that $\overline{G^c}$ has more
than $3n-10$ edges, i.e., $G^c$ has less than $\binom{n}{2} + \binom{n-3}{2}+4$ edges,
contradicting the hypothesis of the theorem.
We distinguish between two cases depending on the parity of $n$.

\noindent\textbf{Case A:} \emph{$n$ is even}. By Lemma~\ref{matchings12}
$G^c$ has two matchings $M^r$, $M^b$ such that $|M^r|=\frac{n}{2}$ and $|M^b|
\geq \frac{n-2}{2}$.  Take two longest proper paths, say $P=x_1y_1x_2y_2\ldots
x_py_p$ and $P'=x'_1y'_1x'_2y'_2\ldots x'_{p'}y'_{p'}$, compatibles with $M^r$
and $M^b$, respectively.  

Notice now that if $2p=n$ or $2p'=n$ then we are finished. In addition, if
$2p'< n-2$, then by Lemma~\ref{missing_edges_matching} there are
at least $2n-4$ blue missing edges and $2n-8$ red ones. This gives a total of
$4n-12 > 3n-10$ missing edges, which is a contradiction. Consequently, in what follows
we may suppose that $2p=2p'=n-2$.

Suppose first that there exists a proper cycle $C$ in $G^c$ such that
$V(C)=V(P)$. Let $e=wz$ be the red edge of $M^r-E(C)$. If there exists a
blue edge $e'$ between $w$ or $z$ and some vertex of $C$, we can easily obtain a
proper Hamiltonian path considering $e,e'$ and the rest of $C$ 
in the appropriate direction. Otherwise as the multigraph is connected, all edges $e'$
between the endpoints of $e$ and $C$ are red. Now as $rd(G^c)=2$, there must
exist a blue edge $e''$ between $w$ and $z$ and therefore we can obtain a
proper Hamiltonian path just as before but starting with $e''$ instead of $e$.

Next suppose that there exists no proper cycle $C$ in $G^c$ such that
$V(C)=V(P)$. By Lemma~\ref{missing_edges_matching} there are at least $2n-4$
blue missing edges. Consider now the path $P'$ and let $v_1,w_1$ be the two
vertices of $G^c-P'$. It is clear that if there exists a blue edge joining
$v_1$ and $w_1$, then $|M^b| = \frac{n}{2}$. Thus, 
by symmetry on the colours there are at least $2n-4$ red
missing edges. This gives a total of $4n-8 > 3n-10$ blue and red missing edges, a contradiction.
Otherwise, assume that there is no blue edge between $v_1$ and $w_1$. In this
case we will count the red missing edges assuming that we cannot extend $P'$ to a
proper Hamiltonian path. If there exists no cycle $C'$ in $G^c$ such that
$V(C')=V(P')$, then by Lemma~\ref{edges_without_cycle} there are $2p'-2=n-4$ red
missing edges.  By summing up we obtain $3n-8 > 3n-10$ missing
edges, which is a contradiction. Finally, assume that there exists a proper cycle $C'$
in $G^c$ such that $V(C')=V(P')$. Set $C=c_1c_2\ldots c_{2p'}c_{1}$ where
$c(c_ic_{i+1})=r$ for $i=1,3,\ldots, 2p'-1$. If there are three or more red edges
between $\{v_1,w_1\}$ and $\{c_i,c_{i+1}\}$, for some $i=1,3,\ldots, 2p'-1$,
then either the edges $v_1c_i$ and $w_1c_{i+1}$, or $v_1c_{i+1}$ and $w_1c_{i}$
are red. Suppose $v_1c_i$ and $w_1c_{i+1}$ are red. In this case, the path
$v_1c_ic_{i-1}\ldots c_1c_{2p'}\ldots c_{i+1}w_1$ is a proper Hamiltonian one. Otherwise, 
there are at most two red edges between $\{v_1,w_1\}$ and $\{c_i,c_{i+1}\}$,
for all $i=1,3,\ldots, 2p'-1$, then there are $2p'-2=n-4$ red
missing edges.  If we sum up we obtain a total of $3n-8 > 3n-10$
missing edges, which is a contradiction.

\noindent\textbf{Case B:} \emph{$n$ is odd}. By Lemma~\ref{matchings12} $G^c$ has two
matchings $M^r$, $M^b$ such that $|M^r|=|M^b|=\frac{n-1}{2}$. As in
Case A, we consider two longest proper paths $P$ and $P'$ compatibles
with the matchings $M^r$ and $M^b$, respectively.  Suppose first that $2p < n-1$
and $2p'<n-1$. By Lemma~\ref{missing_edges_matching} there are at least $2n-6$
blue and $2n-6$ red missing edges. We obtain a total of $4n-12 > 3n-10$
missing edges, which is a contradiction.

Suppose next $2p=2p'=n-1$ (the cases where $2p < n-1$ and $2p'=n-1$, or $2p =
n-1$ and $2p'<n-1$ are similar). In the rest of the proof,
we will consider only the path $P$ since, by symmetry, the same arguments may be applied 
for $P'$. In this case we will count the blue missing edges assuming that we cannot
extend $P$ to a proper Hamiltonian path. Now let $v$ be the unique vertex in
$G^c-P$. It is clear that if there is a proper cycle $C$ in $G^c$ such that
$V(C)=V(P)$, we can trivially obtain a proper Hamiltonian path since the multigraph
is connected. Then, as there is no proper cycle $C$ in $G^c$ such that $V(C)=V(P)$, by
Lemma~\ref{edges_without_cycle} there are $2p-2=n-3$ blue missing edges.
If there exists a blue edge between $x_1$ and $x_i$, for some $i=2,\ldots,p$,
then the blue edge $vy_{i-1}$ cannot exist in $G^c$, otherwise we would obtain the proper
Hamiltonian path $vy_{i-1}\ldots x_1x_i\ldots y_p$. We can complete the
argument in a similar way if both edges $y_py_i$ and $vx_{i+1}$,
$i=1,\ldots,p-1$ exist in $G^c$ and are on colour blue.  Note that since there
is no proper cycle $C$ in $G^c$ such that $V(C)=V(P)$, then the blue edges $x_1x_i$ and $y_py_{i-1}$, $i=2,\ldots,p$ cannot exist 
simultaneously in $G^c$. Therefore there are $p-1=\frac{n-3}{2}$ blue missing edges. If we make the
sum and multiply it by two (since the same number of red missing edges is
obtained with $P'$), we conclude that there are $3n-9 > 3n-10$ missing edges, which is a contradiction.  
This completes the argument and the proof of the theorem.
\end{proof}

\section{c-edge-coloured multigraphs, $c\geq 3$}\label{3_edge_col}

In this section we study the existence of proper Hamiltonian paths in $c$-edge-coloured 
multigraphs, for $c\geq 3$. We present three main results that involve: (1) the number of edges, (2) 
the number of edges and the connectivity of the multigraph, (3) the number of edges and the rainbow degree.
All results are tight. 

In the next lemma we present a key result that reduces the case $c\geq 4$ to $c = 3$.

\begin{lemma}\label{to3colours} Let $\ell$ be a positive integer. Let $G^c$ be a $c$-edge-coloured connected
multigraph on $n$ vertices and $m \geq c \ \ell + 1$ edges, $c\geq 4$. There
exists one colour $c_j$ such that if we colour the edges of $G^{c_j}$ with another colour
and we delete parallel edges with the same colour, then the resulting
$(c-1)$-edge-coloured multigraph $G^{c-1}$ is connected and has $m' \geq (c-1) \ell + 1$
edges. Furthermore, if $G^{c-1}$ has a proper Hamiltonian path then $G^c$ has one
too. Also, if $rd(G^c)=c$, then $rd(G^{c-1})=c-1$.
\end{lemma}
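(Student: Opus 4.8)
### Proof Strategy

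The plan is to establish the lemma by a counting argument combined with a straightforward colour-merging construction. The central idea is: among the $c$ colours, at least one colour class is ``small'' enough that merging it into another colour cannot destroy too many edges, and the averaging over $c$ colours is exactly what makes the edge-count threshold $c\ell+1$ degrade gracefully to $(c-1)\ell+1$.

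First I would set up the counting. Write $E_i$ for the edge set of colour $i$ and $m_i=|E_i|$, so $\sum_{i=1}^{c} m_i = m \geq c\ell+1$. By averaging, there is some colour $c_j$ with $m_{c_j} \leq \lfloor m/c \rfloor$; more usefully, the smallest colour class satisfies $m_{c_j} \leq m/c$. The key point is that when we recolour every edge of $G^{c_j}$ with some other colour $c_k$ and then delete duplicate parallel edges, the number of edges we lose is at most the number of recoloured edges that collide with an already-present $c_k$-edge on the same vertex pair, which is bounded by $m_{c_j}$. I would choose the target colour $c_k$ to minimise collisions (or simply bound crudely), and then verify $m' \geq m - m_{c_j} \geq m - m/c = m\,(c-1)/c \geq (c\ell+1)(c-1)/c > (c-1)\ell + 1 - 1$. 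The inequality needs to land cleanly at $(c-1)\ell+1$, so I expect this is where a careful choice of which colour to merge (rather than naive averaging) is needed: one should argue that \emph{some} pair of colours can be merged losing at most $\lfloor m/c\rfloor$ edges and that the resulting count still clears the threshold.

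Next I would handle the two structural consequences, both of which are easy once the construction is fixed. Connectivity of $G^{c-1}$ is immediate: recolouring edges and deleting only \emph{duplicate} parallel edges (i.e.\ an edge that already exists in the target colour on the same endpoints) never removes a pair of vertices' adjacency entirely, so the underlying graph is unchanged and remains connected. For the proper-Hamiltonian-path transfer, the observation is that a proper path in $G^{c-1}$ uses each vertex-pair at most once, and every edge of $G^{c-1}$ corresponds to an actual edge of $G^c$ in \emph{some} colour (either its own colour or the pre-merge colour $c_j$); crucially, since $G^{c-1}$ is proper, consecutive edges of the path have distinct colours in $G^{c-1}$, and I must check this distinctness lifts back to $G^c$. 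This lifting is the subtle point: two consecutive path edges both coloured $c_k$ in $G^{c-1}$ cannot occur (properness), so in $G^c$ at most one of any two consecutive edges came from the merge, hence we can recover the original distinct colours $c_j$ versus $c_k$ and keep the path proper.

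The last claim, that $rd(G^c)=c$ forces $rd(G^{c-1})=c-1$, follows by checking each vertex: a vertex incident to all $c$ colours in $G^c$ loses at most the distinction between $c_j$ and $c_k$, and since it had both colours it still sees $c_k$ afterward, so its rainbow degree drops by exactly one to $c-1$, and no vertex can do worse. The main obstacle I anticipate is not any single step but pinning down the arithmetic so the threshold transforms exactly as $c\ell+1 \mapsto (c-1)\ell+1$; the cleanest route is probably to select $c_j$ as a minimum colour class and bound the edges destroyed by $\lfloor m/c \rfloor \le \ell$ when $m\ge c\ell+1$ fails marginally, so I would verify that $m - \lfloor m/c\rfloor \ge (c-1)\ell+1$ holds for all integer $m \ge c\ell+1$, which is the crux computation to get right.
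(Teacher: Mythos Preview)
Your approach is correct and essentially the same as the paper's: pick $c_j$ to be a colour class of minimum size, merge it into some other colour $c_l$, and observe that connectivity, the Hamiltonian-path lift, and the rainbow-degree drop are all immediate from the fact that only duplicate parallel edges are ever removed.

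The only place you diverge is in the edge-count arithmetic. You bound the minimum class by $m_{c_j}\le \lfloor m/c\rfloor$ and then verify $m-\lfloor m/c\rfloor \ge (c-1)\ell+1$ for all integers $m\ge c\ell+1$; this works, but the paper's route is tidier. It simply splits on whether $|c_j|\le \ell$ or $|c_j|>\ell$: in the first case $m'\ge m-|c_j|\ge (c\ell+1)-\ell=(c-1)\ell+1$ directly; in the second case every colour class exceeds $\ell$, so the $c-1$ surviving classes already give $m'\ge (c-1)(\ell+1)\ge (c-1)\ell+1$. This avoids the floor-function juggling you flagged as the ``crux computation'' and makes the threshold transformation $c\ell+1\mapsto (c-1)\ell+1$ transparent. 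Your worry about needing to pick $c_k$ carefully to minimise collisions is unnecessary in either version: the crude bound (at most $|c_j|$ edges destroyed) already suffices.
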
 
\begin{proof} Let $c_i$ denote the colour $i$, for $i=1,\ldots,c$, and denote by $|c_i|$ the number of edges of $G^c$ with colour $i$. 
Let $c_j$ be the colour with the least number of edges.  Colour the edges on colour $c_j$ with
another colour, say $c_l$, and delete (if necessary) parallel edges with that colour. Call this multigraph $G^{c-1}$. By this, we delete at 
most $|c_j|$ edges. It is clear that this multigraph is connected since we deleted just parallel
edges. Also if $G^{c-1}$ has a proper Hamiltonian path, then this path is
also proper Hamiltonian in $G^{c}$ but perhaps with some edges on colour $c_j$ (in
the case that they have been recoloured with $c_l$). Observe also that, if
$rd(G^c)=c$ then $rd(G^{c-1})=c-1$ since only the colour $c_j$ was removed. We
will show now that $m' \geq (c-1) \ell + 1$. Now, if
$|c_j| > \ell$, then clearly $m' \geq (c-1) \ell + 1$ edges since for all
$i$, $|c_i| > \ell$.  Otherwise $|c_j| \leq \ell$. Now,
$m=\sum_{i=1}^c |c_i| \geq c \ \ell + 1$ and therefore $\sum_{i=1,i\neq j}^{c}
|c_i| \geq c \ \ell - |c_j| + 1 = (c-1)  \ell + \ell - |c_j| + 1$. This last
expression is greater than or equal to $(c-1)  \ell + 1$ since $\ell - |c_j| \geq
0$. Finally, we have that $G^{c-1}$ has $m' \geq (c-1) \ell + 1$ edges as
desired.  
\end{proof}

In view of Theorems~\ref{3coloursgeneral},\ref{3coloursconnected} and \ref{lapin} we need the following definition.

\begin{definition}\label{contractDos}
Let $G^c$ be a $3$-edge-coloured multigraph coloured with $\{r,b,g\}$.
Suppose that there exist two distinct vertices $x,y \in G^c$ such that $y$ is a neighbour of $x$ and either 
$|N(x)|=1$ or $N^r(x)=N^g(x)=\emptyset$.
First remove the vertex $x$. Then, remove all the edges (if any) in colours either $b,r$ or $b,g$, incident to $y$.
Finally rename the vertex $y$ to $s$. We call this process the \emph{contraction} of $x,y$ to $s$. 
\end{definition}

\begin{definition}\label{contract}
Let $G^c$ be a $3$-edge-coloured multigraph coloured with $\{r,b,g\}$.
Suppose that there exist three different vertices $x,y,z \in G^c$ such that $c(xy)=b$ and $c(xz)=r$.
Now the \emph{contraction} of $x,y,z$ is defined as follows: We replace the vertices $x,y,z$ by a new vertex $s$ such that 
$N^{b}(s) = N_{G^c-\{x,y\}}^{b}(z), N^{r}(s) = N_{G^c-\{x,z\}}^{r}(y)$ and 
$N^{g}(s) = N_{G^c-\{x,z\}}^{g}(y) \cap N_{G^c-\{x,y\}}^{g}(z)$.
\end{definition}

Notice that if $G'^c$ is the graph obtained from $G^c$ by any of the contractions above, then any proper Hamiltonian path in 
$G'^c$ can be easily transformed into a proper Hamiltonian one in $G^c$.

\begin{theorem}\label{3coloursgeneral} Let $G^c$ be a $c$-edge-coloured
multigraph on $n$ vertices, $n \geq 2$ and $c\geq 3$. If $m\geq
c\binom{n-1}{2}+1$, then $G^c$ has a proper Hamiltonian path.
\end{theorem}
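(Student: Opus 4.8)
The plan is to prove the statement by induction on $n$, reducing $c\geq 4$ to $c=3$ at the outset and then handling $c=3$ directly via the two contraction operations introduced in Definitions~\ref{contractDos} and~\ref{contract}.

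The reduction step is immediate. Setting $\ell=\binom{n-1}{2}$ in Lemma~\ref{to3colours}, the hypothesis $m\geq c\binom{n-1}{2}+1$ gives a connected $(c-1)$-edge-coloured multigraph $G^{c-1}$ with $m'\geq (c-1)\binom{n-1}{2}+1$ edges, and a proper Hamiltonian path in $G^{c-1}$ lifts to one in $G^c$. Iterating, it suffices to prove the theorem for $c=3$. So from now on I assume $G^c$ is a $3$-edge-coloured multigraph with $m\geq 3\binom{n-1}{2}+1$ edges; equivalently, the complement satisfies $|E(\overline{G^c})|\leq 3\binom{n-1}{2}\cdot\bigl(\tfrac{n-1}{n+1}\bigr)$-type bound — more usefully, since the rainbow complete multigraph on $n$ vertices has $3\binom{n}{2}$ edges, we have $|E(\overline{G^c})|\leq 3\binom{n}{2}-3\binom{n-1}{2}-1 = 3(n-1)-1 = 3n-4$ missing edges.

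For the induction on $n$ at $c=3$, I first dispose of small base cases ($n$ up to some constant, say $n\leq 4$ or $5$) by direct inspection. For the inductive step, the strategy is to locate a vertex on which one of the contractions applies, perform it, invoke the inductive hypothesis on the smaller multigraph, and lift the resulting proper Hamiltonian path back using the remark following Definition~\ref{contract}. The natural dichotomy: either there exist three distinct vertices $x,y,z$ with $c(xy)=b$ and $c(xz)=r$ (more generally two differently coloured edges at a common vertex, which by symmetry of the three colours I may take to be $b$ and $r$), in which case I apply the contraction of Definition~\ref{contract} to $x,y,z$, yielding $G'^c$ on $n-1$ vertices; or no such configuration exists, meaning every vertex is ``nearly monochromatic'' in a strong sense, and I fall into the regime where Definition~\ref{contractDos} applies (some vertex $x$ with $|N(x)|=1$ or $N^r(x)=N^g(x)=\emptyset$). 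In each case the crucial computation is to verify that $G'^c$ still satisfies the edge bound $m'\geq 3\binom{n-2}{2}+1$, i.e. that the contraction destroys few enough edges: the contraction merges vertices and keeps only intersections of green neighbourhoods, so I must bound the edges lost at $x$, at $y$, and at $z$, together with the green edges discarded by the intersection, and show the total does not exceed $3\binom{n-1}{2}-3\binom{n-2}{2}=3(n-2)$.

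The main obstacle will be the edge-counting bookkeeping in the contraction of Definition~\ref{contract}, specifically controlling the loss from the green intersection $N^g(s)=N^g_{G^c-\{x,z\}}(y)\cap N^g_{G^c-\{x,y\}}(z)$, which can in principle delete many green edges. The resolution, as in the proof of Theorem~\ref{s2}, is to argue by contradiction using the missing-edge budget: if the contraction lost too many edges, then one of the three vertices $x,y,z$ must have had an abnormally large complement-degree, pushing $|E(\overline{G^c})|$ above $3n-4$ and contradicting the hypothesis. Concretely, I expect to show that the number of green edges killed by the intersection is bounded by the number of missing green edges already charged to $y$ or $z$, so that no double counting inflates the total. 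The remaining case analysis — where the differently-coloured pair $y,z$ does not exist and one must fall back on a monochromatic or single-neighbour vertex handled by Definition~\ref{contractDos} — is structurally parallel to the final paragraphs of Theorem~\ref{s2} and should follow the same pattern, using the remaining $3n-4$ missing-edge budget to force the required local structure.
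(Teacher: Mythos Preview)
Your overall architecture (reduce to $c=3$ via Lemma~\ref{to3colours}, then induct on $n$ using the two contractions) matches the paper's, but there is a genuine gap in the inductive step, and one factual slip that propagates into it.

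First, the contraction of Definition~\ref{contract} replaces \emph{three} vertices by one, so $G'^c$ has $n-2$ vertices, not $n-1$. The correct edge budget is therefore $3\binom{n-1}{2}-3\binom{n-3}{2}=6n-15$, not $3(n-2)$.

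Second, and more seriously, your plan ``contract any triple $x,y,z$ with $c(xy)\neq c(xz)$ and then check the edge bound'' does not go through. If $x$, $y$, $z$ all have (nearly) full degree, the number of edges of $G^c$ touching $\{x,y,z\}$ is close to $9n-18$, while the new vertex $s$ retains at most $3(n-3)$ edges; the contraction thus loses about $6n-9$ edges, which exceeds the budget $6n-15$. Your proposed fix --- that an excessive loss forces some vertex to have large complement-degree, contradicting the $3n-4$ missing-edge bound --- points in the wrong direction: the contraction loses the most precisely when the involved vertices have \emph{small} complement-degree. So no contradiction arises this way.

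What the paper does to close this gap is a two-stage manoeuvre in the non-monochromatic case. It first observes that if every vertex has coloured degree at least $\lceil n/2\rceil$ in each colour, a result of Abouelaoualim et al.\ yields a proper Hamiltonian cycle directly. Otherwise some $x$ has $d(x)\le 3n-6$; one removes $x$, applies induction to $G^c-\{x\}$ (which now meets the $n-1$ bound), obtains a proper Hamiltonian path $P$ there, and tries to insert $x$ into $P$. If insertion fails, a short counting argument forces $d(x)\le 2n-4$. Only now, with $d(x)$ this small, does the three-vertex contraction lose at most $(2n-4)+(3n-6)<6n-15$ edges, and the induction closes. You are missing both the appeal to the external degree theorem and this ``try-to-insert, then bound $d(x)$'' step; without them the edge count after contraction cannot be controlled.
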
 

For the extremal case consider a rainbow complete multigraph on $n-1$ vertices with $c$ colours and add a new isolated vertex
$x$. Although the resulting multigraph has $c\binom{n-1}{2}$ edges, it 
contains no proper Hamiltonian path since it is not connected.

\begin{proof} By Lemma~\ref{to3colours} we can assume that $c=3$ and let $\{r,b,g\}$ be the set of colours. 
Assume $n \geq 6$ as cases $n \leq 5$ can be checked by exhaustive methods. 
The proof is by induction on $n$. We consider two cases depending on whether $G^c$ contains a monochromatic vertex or not.

\noindent\textbf{Case A:} \emph{There exists a monochromatic vertex $x \in G^c$}.
Assume without loss of generality that all the edges incident to $x$ are on colour $r$.
Suppose first that $d(x) \leq n-2$. Consider the multigraph $G'^c$ obtained from $G^c$ 
by contracting $x$ and one of its neighbours, say $y$, to a vertex $s$ as in Definition~\ref{contractDos} considering $r$ instead of $b$.
By this, we delete at most $3n-6$ edges. This multigraph $G'^c$ has $n-1$ vertices and at least $3\binom{n-2}{2}+1$ edges. 
Then by inductive hypothesis it has a proper Hamiltonian path. Since $s$ is monochromatic, we easily extend the path with $x$ to obtain a proper
Hamiltonian path in $G^c$.
Suppose next that $d(x)=n-1$. Then the multigraph $G^c-\{x\}$ has at least $3\binom{n-2}{2}+1$ edges and therefore by inductive hypothesis 
it has a proper Hamiltonian path $P=x_1x_2\ldots x_{n-1}$. 
Now if $c(x_1x_2)\neq r$ or $c(x_{n-2}x_{n-1})\neq r$, we are done. Otherwise, $c(x_1x_2)=c(x_{n-2}x_{n-1})=r$.
If between $x_1$ and $x_2$ there exist the three possible edges then the path $xx_1x_2\ldots x_{n-1}$ is a proper Hamiltonian one 
by appropriately choosing the edge $x_1x_2$ such that $c(x_1x_2)\neq c(x_2x_3)$ and $c(x_1x_2)\neq c(xx_1)$.
Otherwise the degree of $x_1$ in some colour different from $r$, say $b$ is at most $n-3$. Then as before, we can make the contraction with $x$ and 
$x_1$ removing the edges on colours $b$ and $r$ incident to $x_1$.

\noindent\textbf{Case B:} \emph{There is no monochromatic vertex in $G^c$}.  
Suppose first that there exists a vertex $x$ such that $|N(x)|=1$. Let $y$ be its unique neighbour. 
Now by contraction of $x$ and $y$ as in Definition~\ref{contractDos} and by deleting
edges incident to $y$ in two appropriate colours we can complete the argument.
Assume therefore that $|N(x)|\geq 2$ for all $x \in G^c$.
Moreover we may suppose that there exists a vertex $x$ such that $d(x) \leq 3n-6$. Otherwise, if for all $x\in G^c$, $d(x) \geq 3n-5$, 
then $d^i(x)\geq \left\lceil \frac{n}{2} \right\rceil \forall x\in G^c, i\in \{r,g,b\}$. Thus by a theorem in~\cite{Abouelaoualim2010},  
$G^c$ has a proper Hamiltonian cycle and so a proper Hamiltonian path.
Consider now $G^c-\{x\}$. This multigraph has at least $3\binom{n-2}{2}+1$ edges,
then by the inductive hypothesis it has a proper Hamiltonian path $P=x_1x_2\ldots x_{n-1}$.
We try to add $x$ to $P$ in order to obtain a proper Hamiltonian path in $G^c$. If $x$ is adjacent to either $x_1$ or $x_{n-1}$ in any 
appropriate colour we are done. Otherwise there are four missing edges incidet to $x$. If there are at least five edges between $x$ and some pair of 
vertices $\{x_i,x_{i+1}\}$, $i=2,\ldots,n-2$, then by choosing the appropriate edges $xx_i$ and $xx_{i+1}$, the path 
$x_1\ldots x_ixx_{i+1}\ldots x_{n-1}$ is proper Hamiltonian one in $G^c$. Otherwise 
there are at most four edges between $x$ and every pair of vertices $\{x_i,x_{i+1}\}$, for $i=2,\ldots,n-2$. Therefore there are at least 
$n-3 \geq 3$ missing edges incident to $x$. It follows that the degree of $x$ is at most $3(n-1)-4-(n-3)=2n-4 \leq 3n-10$.
Take now $y,z\in N(x)$ and suppose that $c(xy)=b$ and $c(xz)=r$. Contract $x,y,z$ as in Definition~\ref{contract}.
By this operation we remove at most $3n-10$ edges incident to $x$ and at most $3n-6$ edges incident to $y$ and $z$ in $G^c-\{x\}$. 
It follows that the obtained multigraph on $n-2$ vertices has at least $c\binom{n-1}{2}+1-(3n-10)-(3n-6) \geq c\binom{n-3}{2}+1$ edges. Therefore, 
by the inductive hypothesis it has a proper Hamiltonian path $P$. Now it is easy to obtain from $P$ a proper Hamiltonian path in $G^c$.
\end{proof}

Notice that in the above theorem there is no condition guaranteeing the
connectivity of the underlying graph. In view of Theorem~\ref{3coloursconnected} that adds this condition, we establish the following lemma.

\begin{lemma}\label{to3coloursCon}
Let $G^c$ be a $c$-edge-coloured multigraph on $n$ vertices fullfilling the conditions of Theorem~\ref{3coloursconnected} and $c \geq 4$.
Then either $G^c$ has a proper Hamiltonian path or $G^c$ contains a connected $(c-1)$-edge-coloured multigraph $G^{c-1}$ on $n$ vertices 
with at least $(c-1)\binom{n-2}{2}+n$ edges such that if $G^{c-1}$ has a proper Hamiltonian path then $G^c$ has one too.
\end{lemma}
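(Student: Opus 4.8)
The plan is to imitate the reduction of Lemma~\ref{to3colours}: let $c_j$ be a colour carrying the fewest edges, recolour every edge of $G^{c_j}$ with some other colour $c_l$, and delete the parallel edges that this creates. Exactly as in Lemma~\ref{to3colours}, the resulting $(c-1)$-edge-coloured multigraph $G^{c-1}$ spans the same vertex set, stays connected (only parallel copies are removed), and has the property that any proper Hamiltonian path of $G^{c-1}$ lifts to one of $G^c$, since an edge recoloured from $c_j$ to $c_l$ may be read back in colour $c_j$. So the whole issue is to guarantee that $G^{c-1}$ still has at least $(c-1)\binom{n-2}{2}+n$ edges, and, when this fails, to produce a proper Hamiltonian path in $G^c$ outright.

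For the edge count, write the edge hypothesis of Theorem~\ref{3coloursconnected} as $m\ge c\binom{n-2}{2}+n$. Recolouring destroys at most $|c_j|$ edges, so $m'\ge m-|c_j|$. First I would dispose of the case $|c_j|\le\binom{n-2}{2}$: here $m'\ge c\binom{n-2}{2}+n-\binom{n-2}{2}=(c-1)\binom{n-2}{2}+n$, which is exactly the required bound, so the second alternative of the lemma holds. Since $c_j$ is a colour of minimum size, the remaining case $|c_j|>\binom{n-2}{2}$ forces every colour class $G^i$ to satisfy $|E(G^i)|>\binom{n-2}{2}$, i.e.\ each colour misses fewer than $\binom{n}{2}-\binom{n-2}{2}=2n-3$ edges.

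In this dense case the naive bound is not enough (recolouring any two such dense colours already forces an overlap of order $n^2$ pairs), so the hard part will be to exhibit a proper Hamiltonian path in $G^c$ directly. The key leverage is that, since colour $i$ misses at most $2n-4$ edges in total, $\sum_{v}\bigl((n-1)-d^i(v)\bigr)\le 4n-8$, whence in each colour only a bounded number of vertices can have small coloured $i$-degree; for large $n$ the overwhelming majority of vertices satisfy $d^i(v)\ge\lceil n/2\rceil$ simultaneously in every colour. I would set aside the bounded set $B$ of these deficient vertices: on $G^c-B$ all coloured degrees remain large, so the degree theorem of~\cite{Abouelaoualim2010} yields a proper Hamiltonian path on $V\setminus B$, and it then remains to splice the few vertices of $B$ back in. Each $v\in B$ is deficient in at most a couple of colours but still has many correctly coloured neighbours in the others, so it can be inserted through an edge whose colour differs from those of its two neighbours, using the contractions of Definitions~\ref{contractDos} and~\ref{contract} precisely as in the proof of Theorem~\ref{3coloursgeneral} to absorb $v$ without creating a monochromatic adjacency.

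The main obstacle I anticipate is exactly this splicing step: controlling the handful of low-degree vertices so that they can all be inserted into a single proper Hamiltonian path without interfering with one another, and checking that the bound $2n-4$ on the missing edges per colour always leaves a free compatible colour at each insertion. Once the insertion is carried out---most cleanly by repeatedly contracting a deficient vertex with a suitable neighbour as in Definition~\ref{contractDos} and invoking induction, or by the explicit routing of Theorem~\ref{3coloursgeneral}, Case~B---the first alternative of the lemma is established, completing the dichotomy.
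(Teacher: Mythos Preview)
Your handling of the easy case $|c_j|\le\binom{n-2}{2}$ matches the paper exactly, and your set-up for the complementary case (every colour class exceeds $\binom{n-2}{2}$ edges) is also correct. But from that point on your argument diverges from the paper's and runs into a real gap.

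Your plan is to isolate a bounded set $B$ of colour-deficient vertices, apply the degree theorem of~\cite{Abouelaoualim2010} on $G^c-B$, and then splice $B$ back in. The trouble is that ``bounded'' here means at most roughly eight vertices \emph{per colour}, and the hypothesis of Theorem~\ref{3coloursconnected} only gives $c<n/2$; so $|B|$ can be of order $n$, not a constant. Even when $|B|$ is small, removing it lowers all coloured degrees by up to $|B|$, so you would need the surviving vertices to have degree at least $\lceil (n-|B|)/2\rceil$ in every colour of $G^c-B$, which is not what your deficit count delivers. Finally, you yourself flag the insertion of $B$ as the ``main obstacle'' and leave it unresolved; the contractions of Definitions~\ref{contractDos} and~\ref{contract} absorb one vertex at a time under specific degree assumptions, and there is no mechanism offered for handling several deficient vertices simultaneously without collision.

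The paper bypasses all of this with a much simpler idea that you are missing: in the dense case one observes that if recolouring $c_j$ into $c_l$ loses more than $|c_j|-\binom{n-2}{2}$ edges for every choice of $c_l$, then $|E^j\cap E^l|\ge\binom{n-2}{2}+1$ for \emph{every} pair of colours. Fixing two colours $c_j,c_l$, the uncoloured simple graph $G$ whose edges are exactly those present in both colours therefore has at least $\binom{n-2}{2}+1$ edges. Any Hamiltonian path in $G$ is automatically a proper path in $G^c$, since consecutive edges can be taken alternately in $c_j$ and $c_l$. The paper then invokes the Hamiltonian-connectedness threshold of~\cite{ByerSmeltzerDM2007} when $G$ is connected, and does a short component analysis (components of sizes $1,n-1$ or $2,n-2$) when it is not, using the connectivity of $G^c$ to bridge the components. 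This two-colour overlap trick is the key lemma you are lacking; it replaces your degree-theorem-plus-splicing strategy entirely.
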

\begin{proof} Let $c_i$ denote the colour $i$ and $E^i$ the set of edges of $G^c$ on colour $c_i$, for $i=1,\ldots,c$. 
Suppose first that there is a colour $c_j$ such that $|E^j| \leq \binom{n-2}{2}$.
Then, colour the edges on colour $c_j$ with another colour, say $c_l$, and delete (if necessary) parallel edges with the same colour. 
Call this multigraph $G^{c-1}$. Clearly $G^{c-1}$ is connected and it has at least $(c-1)\binom{n-2}{2}+n$ edges. Moreover
if $G^{c-1}$ has a proper Hamiltonian path, then it also does $G^{c}$.
Suppose next that for every colour $c_j$, $|E^j| \geq \binom{n-2}{2}+1$. If we proceed as above and we 
obtain that the multigraph $G^{c-1}$ has at least $(c-1)\binom{n-2}{2}+n$ edges, we are done. Otherwise, for each pair of colours
$c_j, c_l$ we have that $|E^j \cap E^l| \geq \binom{n-2}{2}+1$, that is, after colouring the edges on colour $c_j$ with colour $c_l$, 
there are at least $\binom{n-2}{2}+1$ parallel edges on colour $c_l$.
Now take any two colours $c_j,c_l$ and consider the uncoloured simple graph $G$ having same vertex set 
as $G^c$ and for each pair of vertices $x,y$ we add the uncoloured edge $xy$ in $G$ if and only if $xy \in E^j$ and $xy \in E^l$ in $G^c$. 
Clearly $G$ has at least $\binom{n-2}{2}+1$ edges. We distinguish between two cases depending on the connectivity of $G$.\\

Suppose first that $G$ is connected. Add a new vertex $v$ to $G$ and join it to all the vertices of $G$. 
Then $G+\{v\}$ has at least $m\geq \binom{n-1}{2} +3$ edges.
Therefore by~\cite{ByerSmeltzerDM2007}, $G+\{v\}$ is Hamiltonian-connected, that is, each pair of vertices in $G$ is joined by a Hamiltonian path.
In particular we have a Hamiltonian path $P$ that starts at $v$. Therefore if we remove $v$ from $P$ and we take its edges on
alternating colours $c_j$, $c_l$ we obtain a proper Hamiltonian path in $G^c$.

Suppose next that $G$ is disconnected. By a simple calculation on the number of edges of $G$ we can see that $G$ has two components,
say $A$ and $B$, such that either $|A|=1$ and $|B|=n-1$, or $|A|=2$ and $|B|=n-2$.\\
If $|A|=2$ and $|B|=n-2$, let $v,w$ be the vertices of $A$. By the condition on the number of edges, both $A$ and $B$ are complete.
Now, as $G^c$ is connected there exists one edge between $v$ (or $w$) and some vertex $u \in B$ on colour $c_k$.
Therefore we obtain a proper Hamiltonian path in $G^c$ starting with the edge $wv$ on colour $c_j$ (or $c_l$), then $vu$ on colour $c_k$ and 
following any Hamiltonian path in $B$ alternating the colours $c_j$, $c_l$.\\
If $|A|=1$ and $|B|=n-1$, then let $v$ be the unique vertex of $A$.
Now by~\cite{ByerSmeltzerDM2007}, $B$ has a Hamiltonian cycle unless it is isomorphic to a complete graph 
on $n-2$ vertices plus one vertex, say $w$, joint to exactly one vertex, say $u$, of the complete graph $B-\{w\}$. Now if $B$ has a Hamiltonian cycle 
$C$, then as $G^c$ is connected, there exists one edge between $v$ and some vertex in $B$ in some colour, say $c_k$. Therefore we obtain a proper 
Hamiltonian path in $G^c$ starting at $v$ taking this edge on colour $c_k$, then following $C$ alternating the colours $c_j$, $c_l$.
Alternatively, if $B$ has no Hamiltonian cycle, then $B-\{w\}$ has a Hamiltonian 
path between every pair of vertices. As $G^c$ is connected there exists one edge between $v$ and some vertex $z \in B$ on some colour $c_k$.
If $z\neq u,w$, then taking the edge $vz$ on colour $c_k$, following 
a Hamiltonian path in $B-\{w\}$ that starts at $z$ and ends at $u$ alternating the colours $c_j$, $c_l$ and taking the appropriate edge $uw$ 
we obtain a proper Hamiltonian path in $G^c$. If $z = w$, take the edge $vw$ on colour $c_k$, the edge $wu$ on colour either $c_j,c_l$
and then follow any Hamiltonian path in $B$ starting at $u$, alternating the colours $c_j$, $c_l$, we obtain a proper Hamiltonian path in $G^c$.
If none of the two above cases hold, then $v$ has only one neighbour in $B$ and $z=u$. Consider the following two cases.\\
\textbf{Case A:} \textit{The edge $vu$ exists on colour $c_k \neq c_j,c_l$}. Then, as $G^c$ has at least $m\geq c\binom{n-2}{2}+n$ edges and 
$2c<n$, $w$ has a neighbour, say $x$, in $B-\{u,w\}$ on some colour $c_s$. Then we obtain a proper Hamiltonian path in $G^c$ as follows. 
Take the edge $vu$ on colour $c_k$, continue with the edge $uw$ on colour $c_j$ or $c_l$ (depending on the colour $c_s$) and the edge $wx$ on 
colour $c_s$. Last, follow any Hamiltonian path in $B-\{u,w\}$ starting at $x$ by appropriately alternating the colours $c_j$, $c_l$.\\
\textbf{Case B:} \textit{The edge $vu$ exists only on colour $c_j$ or $c_l$, say $c_j$, but not both}. Now, by a similar argument as in case A, 
$w$ has a neighbour, say $x$, in $B-\{u,w\}$ on some colour $c_s$. 
Let $P$ be an alternating Hamiltonian path in $B-\{w\}$ from $u$ to $x$ 
such that its first edge is on colour $c_l$ and its last edge has colour different of $c_s$ 
(this is always possible because of the number of edges of $G^c$). 
Now we obtain a proper Hamiltonian path between $v$ and $w$ in $G^c$ as follows.
Add the edge $vu$ on colour $c_j$ to $P$ and complete the path with the edge $xw$ on colour $c_s$.

This completes the argument and the proof.
\end{proof}

\begin{theorem}\label{3coloursconnected} Let $G^c$ be a connected
$c$-edge-coloured multigraph on $n$ vertices, $n \geq 9$ and $3 \leq c < \frac{n}{2}$. If
$m\geq c\binom{n-2}{2}+n$, then $G^c$ has a proper Hamiltonian path.
\end{theorem}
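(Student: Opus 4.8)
The plan is a two-level induction. The outer layer reduces the number of colours: for $c\ge 4$, Lemma~\ref{to3coloursCon} says that either $G^c$ already has a proper Hamiltonian path or it contains a connected $(c-1)$-edge-coloured spanning multigraph $G^{c-1}$ on $n$ vertices with at least $(c-1)\binom{n-2}2+n$ edges from which a proper Hamiltonian path lifts back to $G^c$; since $c-1\ge 3$, $c-1<c<\frac n2$ and $n\ge 9$, the induction hypothesis on $c$ applies to $G^{c-1}$. This leaves the base case $c=3$, with colours $\{r,b,g\}$, which I would attack by induction on $n$: the bases $n\in\{9,10\}$ are checked directly, and for $n\ge 11$ both $n-1$ and $n-2$ remain $\ge 9$ (and keep $3<n'/2$), so the hypothesis is available after a contraction. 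Throughout I note that the edge bound leaves at most $|E(\overline{G^c})|\le 3\binom n2-(3\binom{n-2}2+n)=5n-9$ missing edges. As in Theorem~\ref{3coloursgeneral}, I split on whether $G^c$ has a monochromatic vertex.

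If some vertex $x$ is monochromatic, say all its edges are red, I would delete $x$. Because $G^c-\{x\}$ then has more than $3\binom{n-2}2$ edges it is still connected, and a short computation shows it keeps at least $3\binom{n-3}2+(n-1)$ edges, so by the induction hypothesis it has a proper Hamiltonian path $P$. It remains to reattach the red-monochromatic $x$, necessarily as an endpoint joined by a red edge to an endpoint of $P$ whose incident path-edge is not red. When one endpoint of $P$ already permits this we are done; the favourable reattachment is exactly what the contraction of Definition~\ref{contractDos} (reading $r$ for $b$) pre-arranges, by forcing the merged vertex to carry no red edge. The awkward configuration, in which both endpoints are blocked, I would clear by the same local path surgery used in Theorem~\ref{3coloursgeneral}, re-choosing the colour of a path-edge or re-contracting, which is possible because a monochromatic vertex forces at least $2n-2$ missing edges and so pins down the structure near the ends of $P$.

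If no vertex is monochromatic, first suppose every vertex has $d^i(x)\ge\lceil \frac n2\rceil$ in each colour $i$; then the degree theorem of~\cite{Abouelaoualim2010} yields a proper Hamiltonian cycle and we are done. Otherwise I would dispose of any vertex with $|N(x)|=1$ through Definition~\ref{contractDos}, and then pick a vertex $x$ of small degree. Inserting $x$ into a proper Hamiltonian path of $G^c-\{x\}$ and counting, as in Theorem~\ref{3coloursgeneral}, shows that either $x$ fits between some consecutive pair (and we are done) or $x$ has at least $n-3$ missing edges, so $d(x)\le 2n-4$. Choosing $y,z\in N(x)$ with $c(xy)=b$, $c(xz)=r$ and contracting $x,y,z$ as in Definition~\ref{contract} deletes at most $2n-4$ edges at $x$ and at most $3n-6$ at $y,z$, hence at most $5n-10$ in all; since $5n-10\le 6n-19$ for $n\ge 9$, the resulting multigraph on $n-2$ vertices still carries at least $3\binom{n-4}2+(n-2)$ edges, and the induction hypothesis together with the lifting remark after Definition~\ref{contract} finishes the proof.

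The crux, in both cases, is to keep connectivity and the exact edge count $c\binom{n'-2}2+n'$ alive at the same time through a contraction. At this low density the edge bound no longer forces connectivity --- which is precisely why the theorem must assume it --- so I would have to argue directly that merging $x,y,z$ (or deleting a monochromatic $x$) cannot sever a vertex from the rest; the dangerous case is a vertex joined to the bulk of the graph only through the colours discarded at the contracted vertices. The hypothesis $c<\frac n2$ is what rescues this: it is exactly the slack that validates the $5n-10\le 6n-19$ accounting in Case~B and guarantees enough large-degree vertices to reroute around such a near-cut. Guaranteeing the reattachment of the monochromatic vertex in Case~A, and this connectivity bookkeeping in Case~B, are the two places where I expect the real work to lie.
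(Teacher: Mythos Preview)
Your overall architecture matches the paper's: reduce to $c=3$ via Lemma~\ref{to3coloursCon}, induct on $n$ with bases $n\in\{9,10\}$, split on the existence of a monochromatic vertex, and use the contractions of Definitions~\ref{contractDos} and~\ref{contract}. The edge-counting in Case~B (getting $d(x)\le 2n-4$ and then $5n-10\le 6n-19$) is correct and essentially the paper's computation.

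However, you have correctly identified but \emph{not resolved} the crux: connectivity of the contracted multigraph. You write that ``$c<\frac n2$ is what rescues this'' and that it ``guarantees enough large-degree vertices to reroute around such a near-cut'', but neither claim is accurate. The hypothesis $c<\frac n2$ is used only inside Lemma~\ref{to3coloursCon} for the colour reduction; once $c=3$ it is automatic for $n\ge 9$ and plays no role in the connectivity argument. In the paper the connectivity after contraction is a genuine case analysis occupying roughly half the proof: for the two-vertex contraction one shows the only way $G'^c$ could disconnect is by isolating a single vertex $z$, and then counts missing edges (using that $y$ was chosen non-monochromatic) to force either a contradiction against $|E(\overline{G^c})|\le 5n-9$ or a better choice of contraction pair; for the three-vertex contraction one separates $u\ne s$ from $u=s$, and in the latter case re-contracts using different vertices, ultimately falling back on an ``almost rainbow complete'' component containing a proper Hamiltonian cycle. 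None of this is a rerouting or degree-slack argument, and your sketch does not supply it.

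There is a second, smaller gap in Case~A. You first \emph{delete} the monochromatic vertex $x$, apply induction to $G^c-\{x\}$, and then speak of ``the contraction of Definition~\ref{contractDos} pre-arranging'' the reattachment; but deletion does not pre-arrange anything, and reattaching a red-only vertex to a path whose both end-edges might be red need not succeed by ``the same local path surgery used in Theorem~\ref{3coloursgeneral}''. The paper instead \emph{contracts} $x$ with a carefully chosen neighbour $y$ before applying induction, and to keep within the edge budget $3n-8$ it must split into subcases on $d^b(x)\in\{n-1,n-2,n-3,\le n-4\}$, in two of which it invokes the degree theorem of~\cite{Abouelaoualim2010} on $G^c-\{x\}$ or $G^c-\{x,z\}$. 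Your treatment elides all of this.
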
 

For the extremal example, $n \geq 9$, consider a
rainbow complete multigraph on $n-2$ vertices with $c$ colours and add two new
vertices  $x$ and $y$. Now add the edge $xy$ and all edges between $y$ and
the complete multigraph, all on the same colour. The resulting multigraph, although it has
$c\binom{n-2}{2}+n-1$ edges, it does not contain a proper Hamiltonian path
as $x$ cannot belong to such a path.

\begin{proof} 
By Lemma~\ref{to3coloursCon} we can assume that $c=3$. Let $\{r,b,g\}$ be the set of colours. The proof is by induction on $n$. For $n=9,10$ it can 
be shown by case analysis that the result holds. Now we have two cases depending on whether $G^c$ contains a monochromatic vertex or not.

\noindent\textbf{Case A:} \emph{There exists a monochromatic vertex $x \in G^c$}. 
Notice that among all neighbours of $x$ there exists at least one, say $y$, that is not monochromatic, otherwise
we would have a contradiction on the number of edges. Suppose that $c(xy)=b$.
Now we will contract $x,y$ to a new vertex $s$ as in Definition~\ref{contractDos}. Here the resulting multigraph on 
$n-1$ vertices has to be connected (as we will show later) and we need to delete at most $3n-8$ edges for the induction hypothesis to hold. 

Let us now consider $d^b(x)$. Observe that if $d^b(x)\le n-4$, we delete at most
$3n-8$ edges from $x$ and any selected neighbour $y$ of $x$ and we
are done. Further, from~\cite{Abouelaoualim2010}, if $d^i(z)\geq
\left\lceil \frac {n}{2} \right\rceil$, $\forall z\in G^c - \{x\}, i\in
\{r,g,b\}$, then $G^c-\{x\}$ has a proper Hamiltonian cycle. This would imply a
proper Hamiltonian path in $G^c$. Thus, we may assume that there exists some
vertex $w \in G^c-\{x\}$ such that $d^i(w)< \left\lceil \frac {n}{2}\right\rceil$ 
for some $i\in \{r,g,b\}$.

\noindent\textbf{Subcase A1:} \emph{$d^b(x)=n-1$}. Observe that $w\in N^b(x)$. In this case, considering $w$ instead of $y$, the contraction
process deletes $n-1$ edges from $x$, and at most $n+\frac{n}{2}-3$ from $w$, which is much less than $3n-8$ for $n>10$.
 
\noindent\textbf{Subcase A2:} \emph{$d^b(x)=n-2$}. If there is 
a vertex $y$ adjacent to $x$ such that $d_{G^c-\{x\}}^b(y)+d_{G^c-\{x\}}^r(y) \leq 2n-6$ or $d_{G^c-\{x\}}^b(y)+d_{G^c-\{x\}}^g \leq 2n-6$, 
then we just take $x$ and $y$ for the contraction process. Otherwise for all $y$ adjacent to $x$ we have 
$d_{G^c-\{x\}}^b(y)+d_{G^c-\{x\}}^r(y) \geq 2n-5$ and $d_{G^c-\{x\}}^b(y)+d_{G^c-\{x\}}^g(y) \geq 2n-5$. 
That implies $d^i(y)\geq \left\lceil \frac {n-2}{2} \right\rceil$, $\forall y\in G^c - \{x,z\}, i\in \{r,g,b\}$, where $z$ is the unique 
non-neighbour of $x$. Then by~\cite{Abouelaoualim2010}, $G^c-\{x,z\}$ has a proper Hamiltonian cycle. Finally, we can add $x$ and $z$ to the 
cycle using the fact that $x$ is adjacent to every vertex on it (as it is $z$) by the degree condition of the vertices of the cycle. 
By this we obtain a proper Hamiltonian path in $G^c$.

\noindent\textbf{Subcase A3:} \emph{$d^b(x) = n-3$}. This case is similar to the previous one but finding 
a vertex $y$ adjacent to $x$ such that $d_{G^c-\{x\}}^b(y)+d_{G^c-\{x\}}^r(y) \leq 2n-5$ or $d_{G^c-\{x\}}^b(y)+d_{G^c-\{x\}}^g(y) \leq 2n-5$.
Otherwise the multigraph $G^c-\{x\}$ is rainbow complete (except maybe for the three edges between the two non-neighbours of $x$), we 
easily find a proper Hamiltonian cycle in $G^c-\{x\}$ and then adding $x$, a proper Hamiltonian path in $G^c$.

\noindent\textbf{Case B:} \emph{There is no monochromatic vertex in $G^c$}.  
If there exists a vertex $x$ such that $|N(x)|=1$ we proceed as in case B of Theorem~\ref{3coloursgeneral}.
In what follows we assume that $|N(x)|\geq 2$ for all $x \in G^c$.
Suppose now that there exists a vertex $x$ such that $d(x) \leq 3n-8$. Otherwise, if for all $x\in G^c$, $d(x) \geq 3n-7$, then 
$m \geq \frac{n(3n-7)}{2} \geq 3\binom{n-1}{2}+1$ and by Theorem~\ref{3coloursgeneral} the result holds.
Consider now $G^c-\{x\}$. This multigraph has at least $3\binom{n-3}{2}+n-1$ edges and it is clearly connected.
Then by the inductive hypothesis it has a proper Hamiltonian path $P$.
Now we use the same argument as in Theorem~\ref{3coloursgeneral} to add $x$ to $P$.
If we cannot add it, we obtain that $d(x) \leq 3n-15$.
Finally take $y,z\in N(x)$ such that $c(xy)=b$ and $c(xz)=r$. Contract $x,y,z$ to a new vertex $s$ as in Definition~\ref{contract}. 
By this we delete at most $6n-21$ edges, that is, $3n-15$ edges incident to $x$ and $3n-6$ edges incident to $y$ and $z$ in $G^c-\{x\}$.
Since we can delete at most $6n-19$ edges to use the inductive hypothesis, the result holds.

In order to complete the proof, we will show that, either we can find two or three appropriate vertices to contract such that the obtained multigraph $G'^c$ is 
connected or $G^c$ has a proper Hamiltonian path.\\

\noindent\textit{Contraction of two vertices:} Consider the above contraction of the vertices $x,y$ to $s$ and suppose by contradiction that $G'^c$ is 
disconnected. It can be easily shown that $G'^c$ has two components with one vertex, say $z$, and $n-2$ vertices, respectively. 
Observe first that if $z = s$ then $x$ and $y$ are both monochromatic, a contradiction with the fact that $y$ was chosen not monochromatic.
Consequently $z \neq s$. 

Suppose first that $x$ is not monochromatic. In this case $x$ has $y$ as its unique neighbour.
So, there are $3(n-2)$ missing edges at $x$ and $3(n-3)$ missing edges at $z$ since $z$ is isolated in $G'^c$.
This gives us a total of $6n-15$ missing edges in $G^c$ and this is greater than $|E(\overline{G^c})|=5n-9$ which is a contradiction.

Suppose next that $x$ is monochromatic. In $G^c$ there are at least $2(n-1)$ missing edges at $x$ since it is monochromatic and $3(n-3)$ missing edges at 
$z$ since $z$ is isolated in $G'^c$. Further, there are two more missing edges between $y$ and $z$ since we have the choice of 
which colours to delete at $y$. This gives us a total of $5n-9 = |E(\overline{G^c})|$ missing edges in $G^c$. Now $z$ must be adjacent to $x$ and 
$y$ in colour $b$ otherwise we obtain $5n-8$ missing edges which is a contradiction. Therefore $z$ is also monochromatic and $d(z) = 2$. We take 
then $z$ and $y$ for the contraction (instead of $x,y$) but in this case we delete just two edges at $z$ which guarantees the connectivity of the contracted multigraph.\\

\noindent\textit{Contraction of three vertices:} Suppose by contradiction that after the contraction of
$x,y,z$ to $s$, $G'^c$ is disconnected. Then $G'^c$ has exactly two components with one vertex, say $u$, and $n-3$ vertices, respectively. 

Suppose first that $u \neq s$. In $G^c$ $u$ must have at least two different neighbours 
in two different colours among the vertices $x,y,z$. Otherwise we would be in the case where either $u$ is monochromatic or $u$ has one unique 
neighbour. Let $y'$ and $z'$ be two neighbours of $u$ among $x,y,z$ such that $c(uy') \neq c(uz')$. Now we contract
the vertices $u,y',z'$ (instead of $x,y,z$). Observe that at $u$ we delete at most six edges since $u$ has only $x,y,z$ as neighbours. 
In adittion the red edge $uy$, the blue edge $uz$ and at least one green edge among $uy, uz$ are missing. At $y'$ and $z'$ we delete $3n-6$ 
edges as usual. With this contraction we delete at most $3n$ edges and therefore the contracted multigraph has at least 
$3\binom{n-3}{2}+n-9$ edges which guarantees not only the inductive hypothesis but also the connectivity for $n\geq 10$.

Suppose next that $u=s$. Then there are no red edges between $y$ and $G^c-\{x,y,z\}$ and no blue edges between $z$ and $G^c-\{x,y,z\}$.
Now, since we are not in the previous cases, $y$ has at least two different neighbours $y'$ and $z'$ such that $c(yy') \neq c(yz')$. Then we contract 
the vertices $y,y',z'$ (instead of $x,y,z$). In the contraction process we delete at most $2(n-3)$ edges between $y$ and $G^c-\{x,y,z\}$ 
(since there are no red edges), six between $y$ and the vertices $x,z$, and $3n-6$ at $y'$ and $z'$. We obtain in total at most $5n-6$ deleted 
edges. Now, this new contracted multigraph has $n-2$ vertices and at least $3\binom{n-3}{2}-n-3$ edges. 
Clearly, if the multigraph is connected we are done. Otherwise, as before, it has two components with one vertex and $n-3$ vertices, respectively. 
We can suppose that the contracted vertex is the isolated one, otherwise we are done as above. Observe now that
the component on $n-3$ vertices has at least $3\binom{n-3}{2}-n-3$, therefore it is almost rainbow complete. It is easy to prove by induction that it 
has a proper Hamiltonian cycle. Suppose now without losing generality that $c(yy')=b$ and $c(yz')=r$.
Now, in the original multigraph if we cannot add $y,y',z'$ to the proper cycle in order to obtain a proper 
Hamiltonian path (and also using the fact that the contracted multigraph is disconnected), we obtain that there are
$n-3$ red missing edges and $n-3$ green missing ones at $y'$, $n-3$ blue and $n-3$ green at $z'$, and $n-3$ red at $y$.
We obtain a total of $5n-15$ missing edges. If we have any of the edges $r,b$ or $g$ between $y'$ and $z'$, either $y$ has no green edges
at all to $G^c-\{y,y',z'\}$ leading us to a contradiction on the number of edges, or a proper Hamiltonian path can be found. So,
these three edges are missing. Similar arguments can be used if we have the edge $yy'$ or $yz'$ in colour $g$. Therefore, two more missing edges.
Now if we have the edges $yy'$ in $r$ and $yz'$ in $b$, we can do the contraction using these colours instead of the originals.
Then, either the contracted multigraph is connected and thus we obtain a proper Hamiltonian path, or we obtain a contradiction on the 
number of edges. We can conclude that at least one between these two edges is missing obtaining a total of $5n-9=|E(\overline{G^c})|$.
That implies that $G^c-\{y,y',z'\}$ is rainbow complete and we have all of the green and blue edges between $y$ and $G^c-\{y,y',z'\}$, 
all of the blue between $y'$ and $G^c-\{y,y',z'\}$, and all of the red between $z'$ and $G^c-\{y,y',z'\}$. In this last case, 
it is easy to obtain a proper Hamiltonian path in $G^c$.
\end{proof}

In view of Theorem~\ref{lapin} we prove the following lemma.

\begin{lemma}\label{lemmalapin} Let $G^c$ be a $c$-edge-coloured multigraph on $n$
vertices fullfilling the conditions of Theorem~\ref{lapin}. Then either $G^c$ has a proper Hamiltonian path or 
there exists a vertex $x \in G^c$ such that $d(x) \leq 2n-6$.
\end{lemma}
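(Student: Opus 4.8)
The plan is to prove the statement in its contrapositive form: assuming that $G^c$ has no proper Hamiltonian path, I will produce a vertex $x$ with $d(x)\le 2n-6$. First I would fix a large monochromatic matching, say a red matching $M$ with $|M|\ge\lceil\frac{n-2}{2}\rceil$; its existence follows from the edge bound in Theorem~\ref{lapin} together with $rd(G^c)=c$, by the same reasoning as in Lemma~\ref{matchings12}. I then take a longest proper path $P=x_1y_1\ldots x_py_p$ compatible with $M$. If $2p=n$ then $P$ is already a proper Hamiltonian path, contradicting our assumption; hence $2p<n$ and the set $U=V(G^c)\setminus V(P)$ is nonempty. The candidates for the low-degree vertex are the endpoints $x_1,y_p$ of $P$ and the vertices of $U$.

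The missing edges all come from maximality. For $u\in U$, no edge $ux_1$ can carry a colour other than $c(x_1y_1)$ and no edge $uy_p$ a colour other than $c(x_py_p)$, for otherwise $P$ would extend; because $rd(G^c)=c$ makes every colour genuinely available, this already forces $c-1$ missing edges from $u$ toward each end. For the internal positions I would run the usual insertion argument: were $u$ insertable between two consecutive vertices of $P$ with compatible colours, $P$ would not be longest, so for each such pair a prescribed pair of coloured edges between $u$ and that pair is missing. To turn this into a clean count I would appeal directly to Lemma~\ref{edges_without_cycle} and Lemma~\ref{missing_edges_matching}, splitting on whether or not $G^c$ contains a proper cycle $C$ with $V(C)=V(P)$, since those lemmas already package exactly these missing edges.

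Summing the missing edges at a single, well-chosen vertex and rewriting $d(x)=c(n-1)-\bigl(\text{missing edges at }x\bigr)$ should give $d(x)\le 2n-6$. The step I expect to be the main obstacle is extracting the exact constant $2n-6$ rather than a weaker bound, and in particular the near-spanning regime $2p=n-2$, where $U$ is too small to absorb many missing edges. In that regime I would charge the deficit to an endpoint of $P$ instead: using the dichotomy on the spanning proper cycle to forbid a linear number of non-red edges at $x_1$ (or $y_p$), and using the rainbow condition to supply the last few missing edges needed to pass from a crude bound like $2n-3$ down to $2n-6$. Verifying that in every configuration some fixed vertex accumulates at least $c(n-1)-(2n-6)$ missing edges — and that the compatible path is forced short enough whenever the endpoint count alone is insufficient — is where the real work lies.
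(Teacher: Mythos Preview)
Your plan is essentially the paper's own approach: fix a large monochromatic matching in the majority colour (the paper uses Lemma~\ref{matchingsPerfect} to get $|M|=\lfloor n/2\rfloor$, not Lemma~\ref{matchings12}), take a longest compatible proper path $P$, invoke Lemma~\ref{missing_edges_matching} to force $P$ nearly spanning, and then locate the low-degree vertex among the endpoints of $P$ and the vertices outside~$P$. For the obstacle you anticipate in the near-spanning regime, the paper does exactly what you guess: it first obtains $d(x)\le 2n-4$ at the unique outside vertex $x$ via the insertion count, and then---in the residual extremal configuration where no further edges are missing at $x$---it counts forbidden non-matching-colour edges $x_1y_i$ and $x_1x_i$ at the endpoint $x_1$ to reach $d(x_1)\le 3(n-1)-\tfrac{n-3}{2}-(n-3)-2\le 2n-6$ for $n\ge 11$; note the proof is written for $c=3$, which is all Theorem~\ref{lapin} requires after Lemma~\ref{to3colours}.
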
 

\begin{proof}
Let $E^i$ be the set of edges of colour $i$, $i \in \{r,g,b\}$, and suppose without loss of generality that $|E^b|\geq |E^r|,|E^g|$. 
Then, as the subgraph $G^b$ has minimum degree one and $|E^b| \geq \binom{n-2}{2}+3$, it can be easily checked that it is connected. 
Thus by Lemma~\ref{matchingsPerfect} there is a matching $M^b$ such that $|M^b|=\frac{n}{2}$ for $n$ even and $|M^b|=\frac{n-1}{2}$
for $n$ odd. Let $P=x_1y_1x_2y_2 \ldots x_py_p$ be the longest proper path compatible with $M^b$. 

Suppose first that $n$ is odd. By Lemma~\ref{missing_edges_matching}, if there is a proper cycle $C$ such that $V(C)=V(P)$, then $|P|\geq n-5$. Else, if such a cycle
does not exist then $|P|\geq n-7$. Otherwise in both cases we obtain a contradiction on the number of edges.
Let us consider here the case $|P|=n-1$ (the other cases are easier to handle, refer to~\cite{leathesis} for more details).
Now observe that if there is a proper cycle $C$ such that $V(C)=V(P)$, then the result easily follows as the unique vertex of $G^c-C$ can be
appropriately joint to $C$ in order to obtain a proper Hamiltonian path.
Assume therefore that there is no proper cycle $C$ such that $V(C)=V(P)$.
Let $x$ be the unique vertex of $G^c-P$.
Clearly we cannot have either the edge $xx_1$ on colours $r$ or $g$, or the edge $xy_p$ on colours $r$ or $g$, otherwise
we easily obtain a proper Hamiltonian path in $G^c$.
Now, if there are at least three edges on colours $r,g$ between $x$ and some pair of 
vertices $\{y_i,x_{i+1}\}$, $i=2,\ldots,p-1$, then by choosing the appropriate edges $xy_i$ and $xx_{i+1}$, the path 
$x_1\ldots y_ixx_{i+1}\ldots y_p$ is a proper Hamiltonian one in $G^c$. Otherwise 
there are at most two edges on colours $r,g$ between $x$ and every pair of vertices $\{y_i,x_{i+1}\}$, for $i=2,\ldots,p-1$. 
Therefore $d^{r,g}(x) \leq n-3$ and clearly $d(x) \leq 2n-4$ as $d^b(x) \leq n-1$. In addition, if we have two more missing edges incident to $x$
we would obtain that $d(x) \leq 2n-6$ as claimed. Now, we can assume the worst case, that is, for each edge $y_ix_{i+1}$ in the path, 
$i=2,\ldots,p-1$, we have both edges $xy_i$,$xx_{i+1}$ on the same colour of $y_ix_{i+1}$ (that is, $r$ or $g$). Otherwise, if we suppose 
without losing generality that $c(xy_i)=r$ and $c(y_ix_{i+1})=g$ then we cannot have the blue edge $xx_i$ (or we would obtain the proper
Hamiltonian path $x_1\ldots x_ixy_i\ldots y_p$). Therefore, there would be one more missing edge at $x$.
Consider now $x_1$. Suppose that we have any edge $x_1y_i$ on colour $r$ or $g$ that is different of the colour of $y_ix_{i+1}$, for $i=1,\ldots,p-1$.
Taking the blue edge $xx_i$ we obtain the proper Hamiltonian path in $G^c$, 
$xx_i\ldots x_1y_ix_{i+1}\ldots y_p$. Otherwise we obtain at least $p-1 = \frac{n-3}{2}$ missing edges $x_1y_i$ on colours $r$ or $g$.
Suppose that we have any edge $x_1x_i$ on at least one colour $r$ or $g$, for $i=2,\ldots,p$. Therefore taking the edge $xy_{i-1}$ on colour
$r$ or $g$ (one of both is supposed to exist) we obtain the proper Hamiltonian path $xy_{i-1}\ldots x_1x_iy_i\ldots y_p$. 
Otherwise the edges $x_1x_i$ on colours $r$ and $g$ are missing for all $i=2,\ldots,p$, that is, $2(p-1)=n-3$ additional missing edges at $x_1$.
Finally, summing up and considering that we cannot have the edge $x_1y_p$ on colours $r$ or $g$ (or $P$ would also be a proper cycle), we obtain
that $d(x) \leq 3(n-1) - \frac{n-3}{2} - (n-3) - 2 \leq 2n-6$ as claimed.

Suppose next that $n$ is even. If there is a proper cycle $C$ such that $V(C)=V(P)$, then by Lemma~\ref{missing_edges_matching} $|P| \geq n-2$.
This case is easy since either $P$ is a proper Hamiltonian path or we can connect the unique edge of $M^b - E(P)$ to $C$ 
in order to obtain a proper Hamiltonian path. Assume therefore that there is no proper cycle $C$ such that $V(C)=V(P)$. It follows by
Lemma~\ref{missing_edges_matching} that $|P|\geq n-4$ otherwise we obtain a contradiction on the number of edges. 
Let us consider just the case $|P|=n-2$ ($|P|=n-4$ is easier, refer to~\cite{leathesis} for full details).
Let $e=xy$ be the edge of $M^b - E(P)$. Now by similar arguments as in the odd case above, we can prove that, either the edge $e$ can be added to $P$ in 
order to obtain a proper Hamiltonian path in $G^c$, or one of the vertices $x,y,x_1,y_p$ has degree at most $2n-6$ as required.
\end{proof}

\begin{theorem}\label{lapin} Let $G^c$ be a $c$-edge-coloured multigraph on $n$
vertices, $n \geq 11$ and $c\geq 3$. 
If $rd(G^c)=c$ and $m\geq c\binom{n-2}{2}+2c+1$, then $G^c$ has a
proper Hamiltonian path.  
\end{theorem} 

For the extremal example, $n \geq 11$, consider a
rainbow complete multigraph, say $A$, on $n-2$ vertices. Add two new vertices
$v_1,v_2$ and join them to a vertex $v$ of $A$  with all possible colours.
The resulting $c$-edge-coloured multigraph has $c\binom{n-2}{2}+2c$ edges
and clearly has no proper Hamiltonian path. 

\begin{proof} By Lemma~\ref{to3colours}
it is enough to prove the theorem for $c=3$. Let $\{r,b,g\}$ be the set of colours. As $m\geq 3\binom{n-2}{2}+7$ then $|E(\overline{G^c})|\leq 6n-16$. The proof will be
done either by construction of a proper Hamiltonian path or using Theorem~\ref{3coloursconnected}.
We will do this by contracting two or three vertices depending on if there exists a vertex $x$ in $G^c$ such that
$|N(x)|=1$ or not.

If there exists a vertex $x\in G^c$ such that $|N(x)|=1$ we contract $x$ and its unique neighbour $y$ to a new vertex
$s$ as in Definition~\ref{contractDos}. By this we delete at most $2n-1$ edges and the resulting multigraph is still connected. 
Thus the conclusion follows from Theorem~\ref{3coloursconnected}.

Suppose next that there is no vertex $x \in G^c$ such that $|N(x)|=1$. It follows that for any vertex $x$ there are two distinct neighbours
$y$,$z$ in $G^c$ such that $c(xy)=b$ and $c(xz)=r$. 
Now by Lemma~\ref{lemmalapin} consider a vertex $x$ such that $d(x)\leq 2n-6$.
Then contract $x,y,z$ to a new vertex $s$ as in Definition~\ref{contract}.
Let $G'^c$ be the resulting multigraph. In this case, as we delete at most $5n-12$ ($=2n-6 + 3(n-3) + 3$) edges, 
it is enough to prove that $G'^c$ is connected to apply Theorem~\ref{3coloursconnected}. 

Suppose therefore by contradiction that $G'^c$ is disconnected. Then it has exactly two components with one vertex, say $u$, and $n-3$ vertices, 
respectively, otherwise we arrive to a contradiction on the number of edges. 

Assume first that $u\neq s$. Then, as in the equivalent case of Theorem~\ref{3coloursconnected}, instead of $x,y,z$, we may find three other vertices 
$u,y',z'$ to contract to a vertex $s'$ just deleting $3n$ edges. This new obtained multigraph has at least 
$3\binom{n-3}{2}-2$ edges. Then, if it is connected we are done, otherwise there is a component with one vertex, say $u'$, and 
another one on $n-3$ vertices with at least $3\binom{n-3}{2}-2$ edges, i.e., almost rainbow complete. Therefore, the biggest component
contains a proper Hamiltonian cycle and then we can easily add either the isolated vertex $u'$ (if $u'\neq s'$) or the three 
$u,y',z'$ (if $u'= s'$) vertices to the cycle to obtain a proper Hamiltonian path in $G^c$.

Assume next $u=s$. If $d(x) \leq n+1$, then the contraction process deletes $4n-5$ edges instead of $5n-12$. Furthermore as $G'^c$ 
is disconnected by hypothesis, its component on $n-3$ vertices has at least $3\binom{n-3}{2}-n+3$ edges. 
As in Theorem~\ref{3coloursconnected}, this component is almost rainbow complete and thus it contains a proper Hamiltonian cycle $C$. 
This allows us to easily add $x,y,z$ to $C$ in order to obtain a proper Hamiltonian path in $G^c$. 
In the sequel, we may suppose that $d(x) \geq n+2$. Then $x$ has two different neighbours $y'$ and $z'$ with parallel edges. Consider the next 
two cases:\\
Assume first that the parallel edges are on the same two colours, say $c(xy') = c(xz')=\{b,r\}$ (cases with other two colours are similar). 
Here we can consider two possible contractions: 1) $x,y',z'$ with $c(xy')=b$, $c(xz')=r$ and 2) $x,y',z'$ with $c(xy') = r$, $c(xz')=b$.
Now, suppose that in both contractions the multigraph is disconnected and the contracted vertex is always the isolated one, otherwise we are finished. 
We can observe that $G^c$ has $n+3$ missing edges incident to $x$ (since $d(x) \leq 2n-6$), $n-3$ green edges and $4(n-3)$ blue and red edges 
incident to $y'$ and $z'$ (since in both contractions the multigraph is disconnected). By this we obtain a total of $6n-12 > 6n-16$ missing edges, 
which is a contradiction.\\ 
Assume next that the parallel edges are not on the same two colours, that is, $c(xy')=\{b,r\}$ and $c(xz')=\{b,g\}$ 
(cases with other combinations are similar). Now since we are not in the previous case, we do not have either the green edge $xy'$ or the red one 
$xz'$. Try any of the three possible contractions: 1) $x,y',z'$ with $c(xy')=b$, $c(xz')=g$, 2) $x,y',z'$ with $c(xy')=r$, $c(xz')=g$ and 
3) $x,y',z'$ with $c(xy') = r$, $c(xz')=b$.
Then, after each of these contractions the multigraph is still disconnected and the contracted vertex is always the isolated one. We can observe 
that there can exist just the red edges between $y'$ and $G^c-\{x,y',z'\}$ and the green edges between $z'$ and $G^c-\{x,y',z'\}$. 
Now as $rd(G^c)=3$ there must exist the green edge $y'z'$ and the red edge $y'z'$. Since we are not in the previous case, the blue edge $y'z'$ is 
not present. We find us in the situation that 
$c(xy')=\{b,r\}$, $c(xz')=\{b,g\}$ and $c(y'z')=\{r,g\}$. Now, we have nine different contractions to try, three for each triplet $x,y',z'$, 
$y',x,z'$ and $z',x,y'$. If in all of them we are in this same situation (the contracted multigraph is disconnected and the isolated vertex is the 
contracted one) we can conclude that in $G^c$ there can exist just the blue edges between $x$ and $G^c-\{x,y',z'\}$, 
the red edges between $y'$ and $G^c-\{x,y',z'\}$, and the green edges between $z'$ and $G^c-\{x,y',z'\}$. This gives a total of
$6(n-3)$ missing edges in $G^c$. Finally, adding the three missing edges $xy'$ in green, $xz'$ in red and $y'z'$ in blue, 
we obtain $6n-15$ missing edges which is a contradiction.
\end{proof}

\noindent\textbf{Acknowledgments:} The authors would like to thank N. Narayanan for his valuable corrections and comments.

\bibliography{biblio}

\end{document}